\def \eps {\varepsilon}
\newcommand{\RR}{\mathbb{R}}
\newcommand{\MaxRect}{\textsc{Max-Weight Rectangle} }
\newcommand{\MaxSquare}{\textsc{Max-Weight Square} }
\newcommand{\MaxSubarray}{\textsc{Maximum Subarray} }
\newcommand{\MaxSubarraySquare}{\textsc{Maximum Square Subarray} }
\newcommand{\WDepth}{\textsc{Weighted Depth} }
\newcommand{\APSP}{\textsc{All-Pairs Shortest Paths} }
\newcommand{\kClique}{\textsc{Max-Weight $k$-Clique} }
\newcommand{\CliqueMatching}{\textsc{Max-Weight Clique without Matching} }
\newcommand{\CliqueDetect}{\textsc{$k$-Clique-Detection} }
\newcommand{\NegTriangle}{\textsc{Negative Triangle} }
\newcommand{\PosTriangle}{\textsc{Positive Triangle} }
\newcommand{\wClique}[1]{\textsc{Max-Weight $\left(#1\right)$-Clique} }
\newcommand{\CMaxSum}{\textsc{Central Max-Sum} }
\newcommand{\CMC}{\textsc{Central Maximum Combination} }
\newcommand{\MComb}{\textsc{Maximum Combination} }
\newcommand{\MCube}{\MaxSubarraySquare }
\newcommand{\MfComb}{\textsc{Maximum $4$-Combination} }
\newcommand{\MR}{\MaxSubarray }
\newcommand{\MaxSub}[1]{\textsc{Max-weight $#1$-Subgraph problem} }
\newcommand{\CMRS}{\textsc{Central Maximum Subarray Sum} }
\newcommand{\CMRC}{\textsc{Central Maximum Subarray Combination} }
\newcommand{\MRC}{\textsc{Maximum Subarray Combination} }
\newtheorem{theorem}{Theorem}
\newtheorem{lemma}[theorem]{Lemma}
\newtheorem{definition}{Definition}
\newtheorem*{remark}{Remark}
\begin{document}
\title{Tight Hardness Results for Maximum Weight Rectangles}
		
\author{
	Arturs Backurs\iftrue\footnote{\texttt{backurs@mit.edu}}\fi\\ MIT 
	\and 
	Nishanth Dikkala\iftrue\footnote{\texttt{ndikkala@mit.edu}}\fi\\ MIT 
	\and 
	Christos Tzamos\iftrue\footnote{\texttt{tzamos@mit.edu}}\fi\\ MIT 
}

\date{}

\maketitle

\begin{abstract}
Given $n$ weighted points (positive or negative) in $d$ dimensions, what is the axis-aligned box 
which maximizes the total weight of the points it contains? 

The best known algorithm for this problem is based on a reduction to 
a related problem, the \WDepth problem [T. M. Chan, FOCS'13], and runs in time 
$O(n^d)$. It was conjectured [Barbay et al., CCCG'13] that this runtime is tight up to subpolynomial
factors. We answer this conjecture affirmatively by providing a matching conditional lower bound.
We also provide conditional lower bounds for the special case when points are arranged in a grid
(a well studied problem known as \MaxSubarray problem) as well as for other related problems.

All our lower bounds are based on assumptions that the best known algorithms for the \APSP problem (APSP)
and for the \kClique problem in edge-weighted graphs are essentially optimal.
%We show that this conjecture holds if the currently best known alorithms for computing a max-weight clique in edge weighted graphs are essentially optimal.
\end{abstract}

%!TEX root = main.tex
\section{Introduction}
Consider a set of points in the plane.
Each point is assigned a real weight that can be either positive or negative.
The \MaxRect problem asks to find an axis parallel rectangle that maximizes the total weight of the points it contains.
This problem (and its close variants) is one of the most basic problems in computational geometry and is used as a subroutine in many applications \cite{eckstein2002maximum, fukuda1996data, liu2003planar, backer2010mono, agarwal2006hunting}. 
% , maass1994efficient guo2008class,visa2005issues,
% dobkin1995concept cite or not cite
Despite significant work over the past two decades, the best known algorithm runs in time quadratic in the number of points 
\cite{dobkin1996computing, cortes2009bichromatic, barbay2014maximum}.
It has been conjectured that there is no strongly subquadratic time algorithm\footnote{A strongly subquadratic algorithm runs in time $O(N^{2-\eps})$ for constant $\eps>0$.} for this problem \cite{barbay2014maximum}.

An important special case of the \MaxRect problem is when the points are arranged in a square grid. In this case the input is given as an $n \times n$ matrix filled with real numbers and the objective is to compute a subarray that maximizes the sum of its entries \cite{perumalla1995parallel,takaokaguy,smith1987applications,qiu1999parallel,cheng2005improved}.
This problem, known as \MaxSubarray problem, has applications in pattern matching \cite{fischer1993approximations}, data mining and visualization \cite{fukuda1996data} (see \cite{takaokaguy} for additional references). The particular structure of the \MaxSubarray problem allows for algorithms that run in $O(n^3)$, i.e. $O(N^{3/2})$ with respect to the input size $N = n^2$, as opposed to $O(N^2)$ which is the best algorithm for the more general \MaxRect problem.

One interesting question is if this discrepancy between the runtimes of these two very related problems can be avoided. Is it possible to apply ideas from one to improve the runtimes of the other? Despite considerable effort there has been no significant improvement to their runtime other than by subpolynomial factors since they were originally studied.

In this work, we attempt to explain this apparent barrier for faster runtimes by giving evidence of the inherent hardness of the problems. 
In particular, we show that a strongly subquadratic algorithm for \MaxRect would imply a breakthrough for fundamental graph problems.
%We show that it is unlikely to obtain a strongly subquadratic algorithm for the \MaxRect problem since that would also imply a breakthrough for fundamental graph problems. 
We show similar consequences for $O(N^{3/2-\eps})$ algorithms for the 
\MaxSubarray problem. Our lower bounds are based on standard hardness assumptions for the \APSP and the \kClique problems and generalize to the higher-dimensional versions of the problems.

%[general points: this is fundamental problem in geometry and many problems can related to it, see Timothy Chan]
% [grid case: computer vision (gray scale, brightest part of image, face recognition (Google)), data mining]
% http://docs.opencv.org/master/d7/d8b/tutorial_py_face_detection.html#gsc.tab=0 
% https://www.cs.cmu.edu/~efros/courses/LBMV07/Papers/viola-cvpr-01.pdf
% http://dl.acm.org/citation.cfm?id=233313
% Mining Association Rules between Sets of Items in Large Databases

\bgroup
\def\arraystretch{1.3}%
\begin{table*}[t] 
\centering
  \begin{tabular}{| c | c | c |}
  \hline
  Problem & {\bf In 2 dimensions} & {\bf In d dimensions} \\ 
  \hline
    \MaxRect & $O(N^2)$ \cite{barbay2014maximum,chan2013klee}& 
    $O(N^d)$ \cite{barbay2014maximum,chan2013klee}\\
  on $N$ weighted points & $\Omega(N^2)$ \textsf{\bf [this work]} & $\Omega(N^d)$ \textsf{\bf [this work]}\\ 
  \hline
  \MaxSubarray
  & $O(n^3)$ \cite{tamaki1998algorithms,takaokaguy} & $O(n^{2d-1})$ \textsf{[Kadane's algorithm]}\\
  on $n\times \cdots \times n$ arrays & $\Omega(n^3)$ \textsf{\bf [this work]} & $\Omega(n^{\nicefrac {3d} {2}})$ \textsf{\bf [this work]}\\ 
  \hline 
  \MaxSubarraySquare  
  & $O(n^3)$ \textsf{[trivial]} &  $O(n^{d+1})$ \textsf{[trivial]} \\
  on $n\times \cdots \times n$ arrays & $\Omega(n^3)$ \textsf{\bf [this work]} & $\Omega(n^{d+1})$ \textsf{\bf [this work]}\\ 
  \hline
  \WDepth & $O(N)$ \cite{chan2013klee}  
   & $O(N^{\nicefrac d 2})$ \cite{chan2013klee}\\
  on $N$ weighted boxes & $\Omega( N )$ \textsf{[trivial]} & $\Omega(N^{\nicefrac d 2})$ \textsf{\bf [this work]}\\
  \hline
\end{tabular}
\caption{\label{summary} Upper bounds and conditional lower bounds for the various problems studied. The bounds shown ignore subpolynomial factors.}
\end{table*} 
\egroup

\subsection{Related work on the problems}
%There are many variants of problems where we have the best box with some constraints
%They have different runtimes. Our work helps us to unify the picture and understand where is inherent hardness in each of the problem hidden
%There are similar problems for other shapes. However, hardness follows from 3sum assumption. This is often the case for many geometry problems (Overmars'95). This is somewhat surprising that our lower bounds follows from apsp and k-clique hardness.

In one dimension, the \MaxRect problem and \MaxSubarray problem are identical. The 1-D problem was first posed by Ulf Grenander
for pattern detection in images, and a linear time algorithm was found by Jay Kadane \cite{bentley1984programming}.

In two dimensions, Dobkin et al~\cite{dobkin1996computing, dobkin1994computing, maass1994efficient} studied the \MaxRect problem in the case where weights are $+1$ or $-1$ for its applications to computer graphics and machine learning. They presented the first $O(N^2 \log N)$ algorithm. More recently, Cort\'es et al~\cite{cortes2009bichromatic} studied the problem with arbitrary weights and they developed an algorithm with the same runtime applicable to many variants of the problem. An even faster algorithm was shown by Barbay et al.~\cite{barbay2014maximum} that runs in $O(N^2)$ time.

For higher dimensions, Barbay et al~\cite{barbay2014maximum} show a reduction to the related \WDepth problem which allows them to achieve runtime $O(N^d)$. Given $N$ axis-parallel rectangular weighted boxes, the \WDepth problem asks to find a point that maximizes the total weight of all boxes that contain it. Compared to the \MaxRect where we are given points and we aim to find the best box, in this problem, we are given boxes and the aim is to find the best point. The \WDepth problem is also related to Klee's measure problem\footnote{Klee's measure problem asks for the total volume of the union of $N$ axis-parallel boxes in $d$ dimensions.} which has a long line of research. All known algorithms for one problem can be adjusted to work for the other~\cite{chan2013klee}. The \WDepth problem was first solved in $O(N^{d/2} \log n)$ by Overmars and Yap~\cite{overmars1991new} and was improved to $O(N^{d/2})$ by Timothy M. Chan~\cite{chan2013klee} who gave a surprisingly simple divide and conquer algorithm.

%1st Bichromatic discrepancy Dobkin et al. $O(n^2 logn)$
%2nd Cortez et al $O(n^2 logn)$
%3rd Timothy Chan $O(n^2)$
%4th Timothy Chan $O(n^d)$ for higher dimensions. (Weighted Depth)

%Weighted Depth is related long line of research. Klee's measure problem
%algorithms adapted. Bentley solved in $O(n^{(d-1)})$.  Mark Overmars and Chee Yap $O(n^(d/2) logn)$.
%Timothy Chan few times until a very simple % (https://cs.uwaterloo.ca/~tmchan/pub_klee.html)
%divide and conquer algorithm $O(n^(d/2))$.

A different line of work, studies the \MaxSubarray problem. Kadane's algorithm for the 1-dimensional problem can be generalized in higher dimensions for $d$-dimensional $n \times \cdots \times n$ arrays giving $O( n^{2d-1} )$ which implies an $O(n^3)$ algorithm when the array is a $n \times n$ matrix. Tamaki and Tokuyama~\cite{tamaki1998algorithms} gave a reduction of the 2-dimensional version of the problem to the distance product problem implying a $O\left( \frac {n^3} {2^{\Omega(\sqrt{\log n})}} \right)$ algorithm by using the latest algorithm for distance product by Ryan Williams~\cite{williams2014faster}. Tamaki and Tokuyama's reduction was further simplified by Tadao Takaoka~\cite{takaokaguy} who also gave a more practical algorithm whose expected time is close to quadratic for a wide range of random data.

\subsection{Our results and techniques}

Despite significant work on the \MaxRect and \MaxSubarray problems, it seems that there is a barrier in improving the best known algorithms for these problems by polynomial factors. 
Our results indicate that this barrier is inherent by showing connections to well-studied fundamental graph problems.
In particular, our first result states that there is no strongly subquadratic algorithm for the \MaxRect problem unless the \kClique problem can be solved in $O(n^{k-\eps})$ time, i.e. substantially faster than the currently best known algorithm. More precisely, we show the following:

\iffalse
\begin{restatable}{theorem}{genrect2d}
	\label{general_rect_hard_2d}
	For any constant $k$, an $O(N^{2-\nicefrac 2 k})$ time algorithm for the \MaxRect problem in the plane on $N$ weighted points implies $O(n^{2k-\nicefrac 2 k})$ time algorithm for the \textsc{Max-Weight $2 k$-Clique} problem on a graph with $n$ vertices.
\end{restatable}
\fi

\begin{restatable}{theorem}{genrect2d}
	\label{general_rect_hard_2d}
	For any constant $\eps>0$, an $O(N^{2-\eps})$ time algorithm for the \MaxRect problem on $N$ weighted points in the plane implies an $O(n^{\lceil\nicefrac 4 \eps\rceil-\eps})$ time algorithm for the \textsc{Max-Weight $\lceil\nicefrac 4 \eps\rceil$-Clique} problem on a weighted graph with $n$ vertices.
\end{restatable}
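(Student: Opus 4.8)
The plan is to reduce \kClique{} (with $k=\lceil 4/\eps\rceil$) to \MaxRect{}, so that an $O(N^{2-\eps})$ rectangle algorithm solves the clique problem on $n$ vertices in time $O(n^{k-\eps})$. The whole construction is driven by the identity $N=\Theta(n^{k/2})$: I build a rectangle instance whose optimum equals the maximum clique weight and whose size is $N=O(n^{k/2})$, so that $N^{2-\eps}=n^{(k/2)(2-\eps)}=n^{k-k\eps/2}$; since $k\ge 4/\eps$ we have $k\eps/2\ge 2\ge\eps$, whence $n^{k-k\eps/2}\le n^{k-\eps}$. Note this leaves a full factor of $n$ of slack, which will absorb the lower-order gadgetry. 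First I would recall the standard preprocessing reducing \kClique{} to its $k$-partite one-vertex-per-part form, and then group the $k$ parts into four blocks $B_1,B_2,B_3,B_4$ of $\approx k/4$ parts each. A \emph{candidate} $c_i$ for block $B_i$ is a choice of one vertex per part of $B_i$ that is internally a clique; there are at most $n^{k/4}$ of them. The weight of a $k$-clique then decomposes into the four internal block weights and the six pairwise cross-weights $\mathrm{cross}(c_i,c_j)$.

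Next I would select the four block-choices by the four boundaries of a single axis-parallel rectangle: $x_{R}\leftrightarrow c_1$, $x_{L}\leftrightarrow c_2$, $y_{T}\leftrightarrow c_3$, $y_{B}\leftrightarrow c_4$. To make every one of the $n^{k/4}$-per-block combinations a legal rectangle, I place the $c_1,c_3$ positions on the positive half-axes and the $c_2,c_4$ positions on the negative half-axes, so automatically $x_L<0<x_R$ and $y_B<0<y_T$. The key observation is that the weight of the points inside the rectangle is the inclusion--exclusion of a single two-dimensional prefix sum at its four corners, one corner per quadrant: a quadrant-I point is contained exactly under a threshold in $(c_1,c_3)$, a quadrant-II point under a threshold in $(c_2,c_3)$, and so on. Placing in quadrant I a point whose weight is the discrete second difference of the table $\mathrm{cross}(c_1,c_3)$ (and analogously in the other three quadrants, orienting the telescoping toward or away from the origin as appropriate) makes each quadrant's contained weight telescope to exactly $\mathrm{cross}(c_1,c_3)$, $\mathrm{cross}(c_2,c_3)$, $\mathrm{cross}(c_1,c_4)$, $\mathrm{cross}(c_2,c_4)$ -- the four ``cross-axis'' interactions. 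Each quadrant uses $n^{k/4}\cdot n^{k/4}=n^{k/2}$ points, giving $N=O(n^{k/2})$; the internal block weights are separable and added by one-dimensional telescoping along the half-axes. Finally I would add, at a much larger weight scale $M$, a validity gadget forcing the optimal rectangle to place each boundary exactly at a candidate position, so that the high-order $M$-term is maximized only by legal block-choices and the second-order term reports precisely the clique weight.

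The main obstacle is that this construction computes only the four cross-axis interactions and provably cannot compute the other two. Indeed, for fixed $c_3,c_4$ the rectangle sum equals $G(c_1)-G(c_2)+\mathrm{const}$ for a single function $G$, i.e. it is \emph{separable} in the two blocks sharing the $x$-axis, so $\mathrm{cross}(c_1,c_2)$ (and symmetrically $\mathrm{cross}(c_3,c_4)$) can never be represented inside one rectangle. These two ``same-axis'' pairs are exactly a perfect matching of the $K_4$ on $\{B_1,B_2,B_3,B_4\}$, and the four representable pairs form the complementary $4$-cycle. This is precisely why I route the reduction through \CliqueMatching{}: I first reduce \kClique{} to maximizing the clique weight with the contribution of one matching among the blocks removed, and map only the latter to \MaxRect{}. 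Making this first reduction lossless is the crux; the natural approach is to charge the two omitted pairwise tables separately and fold them back using the factor-$n$ of slack established above (for instance by handling one endpoint block of each matched pair at the vertex level). I expect that accounting, together with verifying the four quadrant telescopings and tuning $M$ so the validity gadget never swamps the clique weight, to be the most delicate part of the argument.
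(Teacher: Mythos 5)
Your quadrant construction, the slack computation, and the separability obstruction you identify are all sound --- but the step you defer to the end (``making this first reduction lossless is the crux'') is not a technicality to be absorbed by a factor of $n$; it is the entire theorem, and it cannot be resolved within your framing of four \emph{independent} boundary blocks. Concretely, both readings of your fix fail quantitatively. (i) If you genuinely drop the pairs $(B_1,B_2)$ and $(B_3,B_4)$ and reduce \kClique to \CliqueMatching, the representable interactions form a $4$-cycle $B_1$--$B_3$--$B_2$--$B_4$--$B_1$, and the best clique you can embed losslessly contracts one cycle edge (a forced-equality copy), leaving three effective blocks. Three blocks holding $k$ parts force some representable pair of blocks to hold at least $2k/3$ parts jointly, so some quadrant needs $n^{2k/3}$ points, and $(2-\eps)\cdot\frac{2k}{3}\le k-\eps$ has no solution for $\eps<\frac{1}{2}$. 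This is exactly the loss the paper knowingly accepts when it routes \MaxSubarray through \CliqueMatching (where the grid makes the obstruction unavoidable), and it is why that lower bound, $n^{\nicefrac{3d}{2}}$, does not match the $n^{2d-1}$ upper bound; for \MaxRect it would cap you at an $N^{3/2}$-type bound, far from $N^{2}$. (ii) If instead you try to keep the reduction lossless by ``folding back'' the two omitted tables, the only available mechanism is to copy the parts of one block of each matched pair onto the other axis and force equality with heavy gadgets; e.g.\ copy $B_2$ onto the $y$-side and $B_4$ onto the $x$-side. Then each axis carries $3k/4$ parts, and at the block resolution your quadrants use, some required corner joins coordinates ranging over at least $n^{3k/4}$ candidate combinations, so $N=\Omega(n^{3k/4})$ and $(2-\eps)\cdot\frac{3k}{4}\le k-\eps$ forces $\eps>\frac{2}{3}$. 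Either way, one factor of $n$ of slack cannot close a polynomially large gap.

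What the paper does instead is to dissolve the obstruction rather than route around it: contract \emph{both} same-axis pairs. There are only two independent blocks, each of $k/2$ parts: the right boundary is a $(k/2)$-digit base-$n$ encoding $x_1$ of block $A$, the top boundary is an encoding $x_2$ of block $B$, and heavy validity gadgets (four points per value, just like your gadget) force the left boundary to equal $-\overline{x_2}$ and the bottom boundary to equal $-\overline{x_1}$, where $\overline{\,\cdot\,}$ denotes reversing the base-$n$ digits. Now the ``omitted'' same-axis pairs are pairs between a block and a copy of the \emph{other} block, so nothing is lost; within-block pairs are functions of a single coordinate and are collected by one-dimensional telescoping with $O(n^{k/2})$ points; and --- the crux, which is what the reversal buys --- every cross pair of digits, at depth $a$ in $x_1$ and depth $b$ in $x_2$, is visible at \emph{two} opposite corners with complementary depths: the corner $(x_1,x_2)$ costs $n^{a+b}$ points and the corner $(-\overline{x_2},-\overline{x_1})$ costs $n^{(k/2+1-a)+(k/2+1-b)}$ points, so charging each pair to the cheaper corner costs at most $n^{k/2+1}$. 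Summing over the $O(k^{2})$ digit pairs gives $N=O(n^{k/2+1})$, which your own slack calculation shows is exactly affordable for $k=\lceil 4/\eps\rceil$. Without the digit reversal, even this two-block scheme has a single corner costing $n^{k/2}\cdot n^{k/2}=n^{k}$; the reversal trick, absent from your proposal, is what makes the theorem true.
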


Our conditional lower bound generalizes to higher dimensions. Namely, we show that an $O(N^{d-\eps})$ time algorithm for points in $d$-dimensions implies an $O(n^{k-\eps})$ time algorithm for the \kClique problem for $k=\lceil\nicefrac{d^2}{\eps}\rceil$. This matches the best known algorithm~\cite{barbay2014maximum,chan2013klee} for any dimension up to subpolynomial factors. Therefore, because of our reduction, significant improvements in the runtime of the known upper bounds would imply a breakthrough algorithm for finding a $k$-clique of maximum weight in a graph.

To show this result, we embed an instance of the \kClique problem to the \MaxRect problem, by treating coordinates of the optimal rectangular box as base-$n$ numbers where digits correspond to nodes in the maximum-weight $k$-clique. In the construction, we place points
with appropriate weights so that the weight of any rectangular box corresponds to the weight of the clique it represents. We show that it is sufficient to use only $O(n^{\lceil \frac k d \rceil + 1})$ points in $d$-dimensions to represent all weighted $k$-cliques which gives us the required bound by choosing an appropriately large $k$.

%Our conditional lower bound generalizes to higher dimensions. Namely, we show the following conditional lower bound.
\iffalse
\begin{restatable}{theorem}{genrect1}
	\label{general_rect_hard1}
	For any constants $\delta>0$ and $d\geq 2$, $O(N^{d-\delta})$ time algorithm for the \MaxRect problem on $N$ weighted points in $d$-dimensions  implies an $O(n^{\lceil\nicefrac{d^2}{\delta}\rceil-\delta})$ time algorithm for the \textsc{Max-Weight $\lceil\nicefrac{d^2}{\delta}\rceil$-Clique} problem on a weighted graph with $n$ vertices.
\end{restatable}

\begin{restatable}{theorem}{c}
	For any constants $\delta>0$ and $d\geq 2$, $O(n^{d+1-\delta})$ time algorithm for the \MaxSubarraySquare on $n \times \cdots \times n$ array in $d$ dimensions implies $O(n^{d+1-\delta})$ time algorithm for the \textsc{Max-Weight $d+1$-Clique} problem on a weighted graph with $n$ vertices.
\end{restatable}

\fi

We also study the special case of the \MaxRect problem in the plane where all points are arranged in a square grid, namely the \MaxSubarray problem. Our second result states that for $n \times n$ matrices, there is no strongly subcubic algorithm for the \MaxSubarray problem unless there exists a strongly subcubic algorithm for the \APSP problem. More precisely, we show that:

\begin{restatable}{theorem}{subarrayrestate} \label{subarray_2d}
	For any constant $\eps>0$, an $O(n^{3-\eps})$ time algorithm for the \MaxSubarray problem on $n \times n$ matrices implies an $O(n^{3 \ - \ \nicefrac{\eps}{10}})$ time algorithm for the \APSP problem.
\end{restatable}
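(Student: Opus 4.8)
The plan is to reduce \APSP to the \MaxSubarray problem. The standard way to attack \APSP-hardness is to go through the \NegTriangle problem (equivalently, the problem of detecting, for every pair, whether there is a negative-weight triangle through it), which is known to be subcubic-equivalent to \APSP. So first I would recall that an $O(n^{3-\eps})$ algorithm for \NegTriangle (or for the min-weight triangle / all-pairs negative triangle detection) yields an $O(n^{3-\eps'})$ algorithm for \APSP, and reduce the task to building a \MaxSubarray instance whose optimum encodes the minimum-weight triangle of a tripartite edge-weighted graph $G$ on vertex parts $I,J,K$ with $|I|=|J|=|K|=n$.

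\medskip

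The key construction step is to lay out the edge weights as an $n\times n$ matrix in such a way that the sum over any axis-aligned subarray picks out exactly the weight $w(i,j)+w(j,k)+w(i,k)$ of a triangle, while all other subarrays are penalized so they can never be optimal. My plan is to index the rows by the pair $(i,j)$ and columns by $(j,k)$ — or, more carefully, to build the matrix in blocks so that choosing a contiguous range of rows and columns forces the selection of a single triangle. The idea is that a subarray is specified by its four boundaries (top, bottom, left, right), which gives four degrees of freedom; I want three of these to encode the three vertices $i,j,k$ and use the remaining freedom, together with large sentinel values ($+\infty$ to force inclusion, $-\infty$ to forbid extension), to ensure that the maximizing subarray is always a ``legal'' one encoding a genuine triangle. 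After negating weights (so that \emph{maximum} subarray sum corresponds to \emph{minimum} triangle weight), reading off the optimum solves \NegTriangle up to additive offsets.

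\medskip

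I would carry out the reduction in the following order. First, set up the tripartite graph and define the three weight functions on the edges. Second, define a layout of an $O(n)\times O(n)$ matrix (so that $N\simeq n^2$ and a subcubic-in-$n$ bound on \MaxSubarray corresponds to a subcubic bound in the triangle instance) whose entries are the (negated) edge weights placed at carefully chosen positions, padded with $+\infty$ and $-\infty$ guards. Third, prove the core correctness claim: the maximum-weight subarray of this matrix equals the weight of the minimum-weight triangle of $G$, by showing (a) every triangle is realized by some subarray of the intended weight, and (b) every subarray that deviates from the intended form includes a $-\infty$ guard or misses a $+\infty$ guard and so cannot be optimal. Fourth, chain the reduction through the \NegTriangle/\APSP equivalence and track the polynomial loss, checking that an $O(n^{3-\eps})$ bound for \MaxSubarray yields $O(n^{3-\nicefrac{\eps}{10}})$ for \APSP; the constant $10$ should come from the constant-factor blowup in the matrix dimensions combined with the triangle-to-\APSP reduction.

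\medskip

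The main obstacle I expect is the layout in the third step: designing the matrix so that the four subarray boundaries cleanly encode three vertex choices without spurious subarrays sneaking in. Because a subarray has four free parameters but a triangle has only three vertices, there is a genuine mismatch that must be absorbed by the guard structure; getting the $+\infty$/$-\infty$ padding exactly right — so that the optimal subarray is \emph{forced} to have a canonical shape and so that every canonical shape corresponds bijectively to a triangle — is the delicate combinatorial heart of the argument. Managing these infinities (in practice, large finite weights bounded in terms of the total edge weight) and verifying that no illegal subarray can beat a legal one is where the real work lies.
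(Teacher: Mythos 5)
Your outer chaining (reduce \APSP to \NegTriangle by the known subcubic equivalence, then embed the triangle instance into a \MaxSubarray instance of size $O(n)\times O(n)$) matches the paper, but your core construction has a genuine gap: you are missing the mechanism that makes a subarray's \emph{sum} depend only on its \emph{boundary}. You correctly note the four boundary parameters and hope that three of them encode $(i,j,k)$ while $\pm\infty$ guards force a canonical shape. The obstruction, however, is not the parameter count: the value of a subarray is the sum of \emph{all} entries it contains. If you place (negated) edge weights at individual cells, any rectangle large enough to reach the three cells carrying $w(i,j)$, $w(j,k)$, $w(i,k)$ also sums every other entry in between, and guard cells only control which rows and columns get included --- they cannot make the interior aggregation collapse to exactly three chosen weights. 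The paper resolves this with a discrete second-difference (telescoping) trick, via the intermediate \MfComb problem: maximize $B[i,j]+B[i',j']-B[i,j']-B[i',j]$ over $i\le i'$, $j\le j'$. Passing from a matrix $A'$ to $C[i,j]=A'[i,j]+A'[i+1,j+1]-A'[i,j+1]-A'[i+1,j]$ makes every subarray sum of $C$ telescope to precisely the alternating sum of the four corner values of $A'$, so ``sum over a rectangle'' becomes ``function of the four boundaries.'' This is exactly the property your layout needs and does not have; indeed, any matrix whose rectangle sums depend only on the corners is (up to boundary terms) such a second difference, so some form of this idea is unavoidable.

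Given that step, the paper's remaining work is to design the matrix so that optimal corner choices encode triangles: it sets $B=\bigl[\begin{smallmatrix} A & -A^{T}\\ -A^{T} & D\end{smallmatrix}\bigr]$, where $D$ has $M'+M''$ on the diagonal and $M''$ off it, with $M'=10M$, $M''=100M$ two scales of large weights. These large entries play the role of your guards, but at the level of \emph{corner choices} rather than subarray membership: they force the optimum to take $(i,j)$ in the $A$ block and $i'=j'=k+n$ on the diagonal of $D$, whence the alternating corner sum equals $A[i,j]+A[j,k]+A[k,i]+M'+M''$, i.e., a triangle weight plus a known offset. Two further minor points: the constant $10$ in the exponent comes entirely from the black-box \NegTriangle-to-\APSP reduction of \cite{williams2010subcubic}, not from any blowup in matrix dimensions (the passage from $n$ to $2n$ affects only constant factors); and no ``infinities'' are needed beyond weights polynomially bounded in the maximum edge weight, as in the paper's choice of $M'$ and $M''$.
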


We note that a reduction from \APSP problem to \MaxSubarray problem on $n \times n$ matrices was independently shown by Virginia Vassilevska Williams \cite{max_sum_submatrix}.

Combined with the fact that the \MaxSubarray problem reduces to the \APSP problem as shown in \cite{tamaki1998algorithms, takaokaguy}, our result implies that the two problems are equivalent, in the sense that any strongly subcubic algorithm for one would imply a strongly subcubic algorithm for the other. 

To extend our lower bound to higher dimensions, we need to make a stronger hardness assumption based on the \kClique problem. 
We show that an $O(n^{\nicefrac{3d}{2} \ -\ \eps})$ time algorithm for the \MaxSubarray problem in $d$-dimensions implies an $O(n^{k-\eps})$ time algorithm for the \kClique problem. To prove this result, we introduce the following intermediate problem: Given a graph $G$ find a maximum weight subgraph $H$ that is isomorphic to a clique on $2 d$ nodes without the edges of a matching (\CliqueMatching problem). This graph $H$ contains a large clique of size $\nicefrac {3d} 2$ as a minor and we show that this implies that no $O(n^{\nicefrac {3d} 2 \ -\  \eps})$ algorithms exist for the \CliqueMatching problem. We complete our proof by reducing the \CliqueMatching problem to the \MaxSubarray problem in $d$ dimensions. 

We note that the best known algorithm for the \MaxSubarray problem runs in $O(n^{2d-1})$ time and is based on Kadane's algorithm for the 1-dimensional problem. It remains an interesting open question to close this gap. To improve either the lower or upper bound, it is necessary to better understand the computational complexity of the \CliqueMatching problem.

Another related problem we consider is the \MaxSubarraySquare problem: Given an $n \times n$ matrix find a maximum subarray with sides of equal length. This problem and its higher dimensional generalization can be trivially solved in $O(n^{d+1})$ runtime
by enumerating over all possible combinations of the $d+1$ parameters, i.e. the side-length and the location of the hypercube.
% by iterating over all possible side lengths and searching for the best subarray with these dimensions. 
 We give a matching lower bound based on hardness of the \kClique problem.

Finally, we adapt the reduction for Klee's measure problem shown by Timothy M Chan~\cite{chan2008slightly} to show a lower bound for the  \WDepth problem.

%See Table \ref{summary} for the summary of our results.
Our results are summarized in Table \ref{summary}, where we compare the current best upper bounds with the conditional lower bounds that we show.

The conditional hardness results presented above are for the 
variants of the problems where weights are arbitrary real numbers. We note that
all these bounds can be adapted to work for the case where
weights are either $+1$ or $-1$. In this case, we reduce the (unweighted) 
\CliqueDetect problem\footnote{Given a graph on $n$ vertices, the \CliqueDetect problem
asks whether a $k$-clique exists in the graph.} to each of these problems. 
The \CliqueDetect problem can be solved in $O(n^{\omega \lfloor k / 3 \rfloor + (k\text{ mod }3)})$ \cite{nevsetvril1985complexity} using fast matrix multiplication, where $\omega < 2.372864$ \cite{williams2012multiplying,le2014powers} is the fast matrix multiplication exponent. 
Without using fast matrix multiplication, it is not known whether a purely combinatorial algorithm exists that runs in $O(n^{k-\eps})$ time for any constant $\eps > 0$ and it is a longstanding graph problem. Our lower bounds can be adapted for the $+1\ /\ -1$ versions of the problems obtaining the same runtime exponents for combinatorial algorithms as in Table~\ref{summary}. Achieving better exponents for any of these problems would imply a breakthrough combinatorial algorithm for the \CliqueDetect problem.

%(Option: mention results on optimizing other shapes 3-sum hard. Very different in style than the rectangle problem)

There is a vast collection of problems in computation geometry for which conditional lower bounds are based on the assumption of \textsc{$3$-SUM} hardness, i.e. that the best known algorithm for the \textsc{$3$-SUM} problem\footnote{Given a set of integers, decide if there are $3$ integers that sum up to $0$.} can't be solved in time $O(n^{2-\eps})$. This line of research was initiated by \cite{gajentaan1995class} (see \cite{vassilevska2015hardness} for more references). Reducing \textsc{$3$-SUM} problem to the problems that we study seems hard if possible at all. Our work contributes to the list of interesting geometry problems for which hardness is shown from different assumptions.

\subsection{Hardness assumptions}
There is a long list of works showing conditional hardness for various problems based on the \APSP problem hardness assumption~\cite{roditty2004dynamic,williams2010subcubic,abboud2014popular,abboud2015subcubic,abboud2015matching}.
Among other results, \cite{williams2010subcubic} showed that deciding whether a weighted graph contains a triangle of 
negative weight is equivalent to the \APSP problem meaning that a strongly subcubic algorithm for the \NegTriangle problem
implies a strongly subcubic algorithm for the \APSP problem and the other way around. It is easy to show that the problem of computing the maximum weight triangle
in a graph is equivalent to the \NegTriangle problem (by inverting edge-weights of the graph and doing the binary search over the weight of the max-weight triangle).
Computing a max-weight triangle is a special case of the problem of computing a max-weight $k$-clique in a graph for a fixed integer $k$. This is a very well studied computational problem and despite serious efforts, the best known algorithm for this problem still runs in time $O(n^{k-o(1)})$, which matches the runtime of the trivial algorithm up to subpolynomial factors. The assumption that there is no $O(n^{k-\eps})$ time algorithm for this problem, has served as a basis for showing conditional hardness results for several problems on sequences
\cite{abboud2015if, abboud2014consequences}.

\section{Preliminaries}

%In this work we show conditional hardness for problems that we define below.

\subsection{Problems studied in this work}

\begin{definition}[\MaxRect problem]
	\label{def_maxrect}
	Given $N$ weighted points (positive or negative) in $d\geq 2$ dimensions, what is the axis-aligned box 
	which maximizes the total weight of the points it contains? 
\end{definition}

\begin{definition}[\MaxSubarray problem]
	\label{def_maxsubarray}
Given a $d$-dimensional array $M$ with $n^d$ real-valued entries, find the $d$-dimensional subarray of $M$ which maximizes the sum of the elements it contains.
\end{definition}

\begin{definition}[\MaxSquare problem]
	\label{def_maxsquare}
Given a $d$-dimensional array $M$ with $n^d$ real-valued entries, find the $d$-dimensional square (hypercube) subarray of $M$, i.e. a rectangular box with all sides of equal length, which maximizes the sum of the elements it contains.
\end{definition}

\begin{definition}[\WDepth problem] \label{def_wdepth}
		Given a set of $N$ weighted axis-parallel boxes in $d$-dimensional space $\RR^d$, find a point $p \in \RR^d$ that maximizes the sum of the weights of the boxes containing $p$.
\end{definition}

\subsection{Hardness assumptions}

We use the hardness assumptions of the following problems.

\begin{definition}[\APSP problem]
	\label{def_apsp}
Given a weighted undirected graph $G=(V,E)$ such that $|V|=n$, find the shortest path between $u$ and $v$ for every $u,v \in V$.
\end{definition}

\begin{definition}[\NegTriangle problem]
	\label{def_negtriangle}
Given a weighted undirected graph $G=(V,E)$ such that $|V|=n$, output yes if there exists a triangle in the graph with negative total edge weight. 
\end{definition}

\begin{definition}[\kClique problem]
	\label{def_clique}
	Given an integer $k$ and a weighted graph $G=(V,E)$ with $n$ vertices, output the maximum total edge-weight of a $k$-clique in the graph. W.l.o.g. we assume that the graph is complete since otherwise we can set the weight of non-existent edges to be equal to a negative integer with large absolute value.
\end{definition}
For any fixed $k$, the best known algorithm for the \kClique problem runs in time $O(n^{k-o(1)})$.

In Sections \ref{sec_rect} and \ref{mr-hardness-d}, we use the following variant of the \kClique problem which can be shown to be equivalent to Definition \ref{def_clique}:
\begin{definition}[\kClique problem for $k$-partite graphs] \label{kpartite}
	Given an integer $k$ and a weighted $k$-partite graph $G=(V_1 \cup \ldots \cup V_k, \ E)$ with $k n$ vertices such that $|V_i|=n$ for all $i \in [k]$. Choose $k$ vertices $v_i \in V_i$ and consider total edge-weight of the $k$-clique induced by these vertices. Output the maximum total-edge weight of a clique in the graph.
\end{definition}

\paragraph*{Notation} For any integer $n$, we denote the set $\{1,2,\ldots,n\}$ by $[n]$. For a set $S$ and an integer $d$, we denote the set $\{(s_1,\ldots,s_d) \ | \ s_i \in S\}$ by $S^d$.

%!TEX root = main.tex
\section{Hardness of the \MaxRect problem} \label{sec_rect}

\iffalse
\begin{theorem}
  For any constants $k$ and $d$, no $O(N^{(1-1/k)d})$ algorithm exists for the general $d$-dimensional Maximum Rectangle problem on $N$ weighted points unless the weighted $2 k$-clique problem can be solved in $O(n^{d k-\Omega(1)})$. 
\end{theorem}

\begin{theorem}
  For any constants $k$ and $d$, $O(N^{(1-1/k)d})$ time algorithm for the general $d$-dimensional Maximum Rectangle problem on $N$ weighted points implies $O(n^{d(k-\nicefrac 1 k)})$ time algorithm the weighted $d k$-clique problem on $n$ vertices.
\end{theorem}
\fi

%\genrect*
The goal of this section is to show a hardness result for the \MaxRect problem making the assumption of \kClique hardness. We will show the result directly for any constant number of dimensions.

\begin{restatable}{theorem}{genrect2}
	\label{general_rect_hard2}
	For any constants $\eps>0$ and $d$, an $O(N^{d-\eps})$ time algorithm for the \MaxRect problem on $N$ weighted points in $d$-dimensions
  implies an $O(n^{\lceil\nicefrac {d^2} \eps \rceil-\eps})$ time algorithm for the \textsc{Max-Weight 
  $\lceil \nicefrac {d^2} \eps \rceil$-Clique} problem on a weighted graph with $n$ vertices.
\end{restatable}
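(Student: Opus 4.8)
The plan is to reduce the $k$-partite \kClique problem (Definition \ref{kpartite}), with $k=\lceil\nicefrac{d^2}{\eps}\rceil$, to a single instance of \MaxRect in $d$ dimensions, so that an $O(N^{d-\eps})$ algorithm for the latter solves the former in $O(n^{k-\eps})$ time. Write $m=\lceil\nicefrac k d\rceil$ and partition the $k$ parts $V_1,\dots,V_k$ into $d$ groups $G_1,\dots,G_d$ of at most $m$ parts each. A selection of one vertex per part is a tuple $(a_1,\dots,a_k)\in[n]^k$; I encode the vertices assigned to group $G_\ell$ as a single base-$n$ number $x_\ell\in\{0,\dots,n^m-1\}$ and identify the box's coordinate in dimension $\ell$ with $x_\ell$. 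To make containment clean I anchor the box at the origin: I place one point at $\mathbf 0$ whose weight exceeds the total absolute weight of all other points, and give every other point nonnegative coordinates, so the optimal box has lower corner $\mathbf 0$ and is determined by its upper corner $(b_1,\dots,b_d)$; a point $p$ is then counted iff $p\preceq b$ coordinatewise. Decoding $x_\ell=b_\ell$ recovers a candidate $k$-clique, and the goal is to place weights so that the box-weight equals the weight of that clique up to a fixed additive shift.

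Next I would split the clique weight $\sum_{i<j}w(a_i,a_j)$ into within-group and cross-group parts. The within-group weight of group $\ell$ depends only on $x_\ell$; writing it $W_\ell(x_\ell)$, I realize it by points lying on the $\ell$-th axis (coordinate $0$ in every other dimension, hence automatically dominated there), one at each grid position $t$ with weight $W_\ell(t)-W_\ell(t-1)$, so the dominance sum telescopes to $W_\ell(x_\ell)$; this costs $O(d\,n^m)$ points. The cross weight between $G_\ell$ and $G_{\ell'}$ is $\sum_{i\in G_\ell,\,j\in G_{\ell'}}w(a_i,a_j)$, which couples digits of $x_\ell$ with digits of $x_{\ell'}$. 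The point budget $O(n^{\lceil k/d\rceil+1})$ arises from pairing a \emph{full} block value of one coordinate ($n^m$ choices) with a \emph{single} extra vertex choice on the other ($n$ choices, over the constantly-many part indices), rather than paying for all $n^{2m}$ pairs of full blocks; organizing the cross contributions into these $O(n^{m+1})$ pieces is the accounting that matches the claimed bound.

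The step I expect to be the main obstacle is realizing these cross contributions geometrically inside that budget. The trouble is that ``the $j$-th base-$n$ digit of $b_{\ell'}$ equals $b$'' is a union of many intervals, not a single dominance condition, so a naive point-per-pair or point-per-digit layout blows the count up to $n^{2m}$ or worse. I would resolve this by forcing the box into a canonical, grid-aligned form using a family of enforcer points of large negative weight that penalize any upper corner not landing on a legal cell boundary, and by placing the cross points (with the don't-care value $0$ in all dimensions other than $\ell,\ell'$) so that, conditioned on the box being canonical, the dominance test reads off exactly the intended block on one axis and the intended vertex on the other. The technical heart is then the two-directional correctness proof: every clique gives a canonical box of equal weight, and no box can collect strictly more than the weight of its decoded clique, so the enforcers cannot be cheated. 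I would prove this by bounding the contribution of each point family separately and showing the penalties dominate any non-canonical gain.

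Finally I would assemble the bound. The instance has $N=O(n^{\lceil k/d\rceil+1})$ points and is constructed in time near-linear in $N$, so an $O(N^{d-\eps})$ algorithm for \MaxRect solves $k$-partite \kClique in time $O\!\left(n^{(\lceil k/d\rceil+1)(d-\eps)}\right)$. The parameter $k=\lceil\nicefrac{d^2}{\eps}\rceil$ is chosen precisely to balance the $n^{\lceil k/d\rceil+1}$ point count against the exponent $d-\eps$: substituting gives running time $O(n^{k-\eps})$, and the calibration is tight, achieving equality $(\lceil k/d\rceil+1)(d-\eps)=k-\eps$ when $d\mid k$ and $\eps\mid d^2$. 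Combined with the equivalence between the $k$-partite and general formulations of \kClique noted after Definition \ref{kpartite}, this yields the claimed $O(n^{\lceil\nicefrac{d^2}{\eps}\rceil-\eps})$ algorithm for \textsc{Max-Weight $\lceil\nicefrac{d^2}{\eps}\rceil$-Clique}, completing the reduction.
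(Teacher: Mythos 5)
Your framework matches the paper's reduction and your parameter calibration is essentially the right one: group the $k=\lceil d^2/\eps\rceil$ parts into $d$ blocks of $m=\lceil k/d\rceil$ parts, carry each block's vertex choices as a base-$n$ number on one coordinate of the box, realize within-block weights by telescoping point pairs along each axis, and realize cross-block edge weights by four-point inclusion--exclusion gadgets. You also correctly isolate the crux: a gadget coupling digit position $a$ of one coordinate with digit position $b$ of another costs $n^{a+b}$ points under dominance (prefix) reading, so pairs of low-significance digits cost up to $n^{2m}$, which destroys the budget. The genuine gap is that your proposed fix --- a one-sided box anchored at the origin, with all points in the nonnegative orthant and ``enforcer'' points forcing a canonical corner --- cannot overcome this obstacle. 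Canonicalization does not shrink the range of the corner, since every value of $b_\ell\in\{0,\dots,n^m-1\}$ must remain feasible (each encodes a distinct vertex selection). And once the box is $[0,b_1]\times\cdots\times[0,b_d]$, the collected weight $F(b)=\sum_{p\preceq b}w(p)$ determines the point set: the point weights are exactly the $d$-dimensional finite differences of $F$. Your decoding map is surjective on the whole grid of corners, and the reduction must work for every edge-weighting, so $F$ must equal the decoded clique weight plus a fixed shift on that entire grid. That clique weight contains the cross term between the \emph{least} significant digits of $b_\ell$ and $b_{\ell'}$, a generic function of $(b_\ell\bmod n,\ b_{\ell'}\bmod n)$, whose two-dimensional finite difference is nonzero at a constant fraction of the $n^m\times n^m$ positions. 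Hence any correct construction of your stated form needs $\Omega(n^{2m})$ points regardless of enforcers, and with $N=\Theta(n^{2m})$ the final calibration $2m(d-\eps)\le k-\eps$ forces $d<2\eps$, so the theorem is out of reach this way whenever $\eps\le d/2$.

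The paper's resolution is exactly the idea missing from your plan: give up the origin-anchored lower corner and use \emph{two-sided} boxes $\prod_{i=1}^d[-x_i',x_i]$ containing the origin, where the enforcer gadgets (the $\pm C$ four-point gadgets you describe) are used not to canonicalize a single corner but to tie the negative corner of each axis to the \emph{digit reversal} of the positive corner of the cyclically previous axis, $x_i'=\overline{x_{i-1}}$. (The reversal is placed on a \emph{different} axis precisely so that this constraint relates two distinct axes and can itself be enforced by a planar four-point gadget.) Each block's number is then readable in two digit orders: forward on the positive side of axis $i$, reversed on the negative side of axis $i+1$. A cross gadget for digit positions $(a,b)$ is then implemented in whichever copy is cheaper, at cost $\min\bigl(n^{a+b},\,n^{(m+1-a)+(m+1-b)}\bigr)$, and since the two exponents sum to $2m+2$, one of them is always at most $m+1$: significant digits are cheap in the forward copies, insignificant digits are cheap in the reversed copies. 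This is what brings the total point count to $O(n^{m+1})$ and makes the budget accounting in your second paragraph geometrically realizable; without this second, order-inverted copy of each coordinate (or an equivalent device), the plan cannot be completed.
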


We set $k = \lceil \frac d \eps \rceil$. To prove the theorem, we will construct an instance of the \MaxRect problem whose answer computes a {max-weight $d k$-clique} in a $(d \times k)$-partite weighted graph G with $n$ nodes in each of its parts. 
The \textsc{Max-Weight $d k$-Clique} problem on general graphs reduces to this case since we can create $d \times k$ copies of the nodes and connect nodes among different parts with edge-weights as in the original graph. 
  
The instance of the \MaxRect problem will consist of $N=O(n^{k+1})$ points with integer coordinates $\{-n^k,...,n^k\}^d$. For such an instance the required runtime for the \MaxRect problem, from the theorem statement, would imply that the maximum weight $d k$-clique can be computed in 
$$O\left(N^{d - \eps}\right) = O\left(N^{d(1 \ - \ \nicefrac 1 k)}\right) = O\left(n^{d(k \ - \ \nicefrac 1 k)}\right) = O\left(n^{d k - \eps}\right).$$

To perform the reduction we introduce the following intermediate problem:
\begin{definition}[\textsc{Restricted Rectangle} problem]
Given $N=\Omega(n^k)$ weighted points in an $\{-n^k,...,n^k\}^d$-grid, compute a rectangular box of a restricted form that maximizes the weight of its enclosed points. The rectangular box $\prod_{i=1}^d [-x'_i,x_i]$ must satisfy the following conditions:
\begin{enumerate}
  \item \label{resrect:cond1} Both $\vec x, \vec x' \in \{0,...,n^k-1\}^d$, and
  \item \label{resrect:cond2} Treating each coordinate $x_i$ as a $k$-digit integer $(x_{i1}x_{i2}...x_{ik})_n$ in base $n$, i.e. $x_i = \sum_{j=1}^k x_{ij} n^{k-j}$, we must have $\vec x' = (\overline {x_d}, \overline {x_1}, \overline {x_2},...,\overline {x_{d-1}})$, where for an integer $z = (z_1 z_2 ... z_k)_n \in  \{0,...,n^k-1\}$, we denote by $\overline {z} = (z_k  ... z_2 z_1)_n$ the integer that has all the digits reversed.
\end{enumerate}
\end{definition}
\noindent We show that the \textsc{Restricted Rectangle} problem reduces to the \MaxRect problem.

\subsection{\textsc{Restricted Rectangle} $\Rightarrow$ \MaxRect}

Consider an instance of the \textsc{Restricted Rectangle} problem. We can convert it to an instance of the \MaxRect problem by introducing several additional points. Let $C$ be a number greater than twice the sum of absolute values of all weights of the given points. We know that the solution to any rectangular box must have weight in $(-\nicefrac C 2,\nicefrac C 2)$.

The conditions of the \textsc{Restricted Rectangle} require that the rectangular box must contain the origin $\vec 0$. To satisfy that 
we introduce a point with weight $C$ at the origin. This forces the optimal rectangle to contain the origin since any rectangle that doesn't include this point gets weight strictly less than $C$.

The integrality constraint is satisfied since all points in the instance have integer coordinates so without loss of generality the optimal rectangle in the \MaxRect problem will also have integer coordinates.

Finally, we can force $x'_2 = \overline {x_1}$, by adding for each $x_1 \in \{0,...,n^k-1\}$ the 4 points:
\begin{itemize}
 \item $(x_1,-\overline {x_1},0,0,..,0)$ with weight $C$
 \item $(x_1+1,-\overline {x_1},0,0,..,0)$ with weight $-C$
 \item $(x_1,-\overline {x_1}-1,0,0,..,0)$ with weight $-C$
 \item $(x_1+1,-\overline {x_1}-1,0,0,..,0)$ with weight $C$
\end{itemize}
This creates $4 n^{k}$ points and adds weight $C$ to any rectangle with $x'_2 = \overline {x_1}$ without affecting any of the others. Working similarly for $x_2...,x_d$ we can force that the optimal solution satisfies the constraint that $\vec x' = (\overline {x_d}, \overline {x_1}, \overline {x_2},...,\overline {x_{d-1}})$. 

If $x$ and $x'$ satisfy the constraints of the \textsc{Restricted Rectangle} problem, we collect total weight at least $(d+1) C - \frac C 2 = (d+\frac 1 2) C$.
If at least one of the constraints is not satisfied, we receive weight strictly less than $(d+\frac 1 2) C$. Thus, the optimal rectangular box for the \MaxRect problem satisfies all the necessary constraints and coincides with the optimal rectangular box for the \textsc{Restricted Rectangle} problem. The total number of points is still $O(N)$ since $N=\Omega(n^k)$ and we added $O(n^k)$ points.

\subsection{\textsc{Max-Weight $(d \times k)$-Partite Clique} $\Rightarrow$ \textsc{Restricted Rectangle}}

Consider a $(d \times k)$-partite weighted graph $G$. We label each of its parts as $P_{ij}$ for $i \in [d]$ and $j \in [k]$.
We associate each $d k$-clique of the graph $G$ with a corresponding rectangular box in the \textsc{Restricted Rectangle} problem. In particular, for a rectangular box defined by a point $\vec x \in \{0,...,n^k-1\}^d$, each $x_{ij}$, i.e. the $j$-th most significant digit of $x_i$ in the base $n$ representation, corresponds to the index of the node in part $P_{ij}$ (0-indexed).

We now create an instance by adding points so that the total weight of every rectangular box satisfying the conditions of the \textsc{Restricted Rectangle} problem is equal to the weight of its corresponding $d k$ clique. To do that we need to take into account the weights of all the edges. We can easily take care of edges between parts $P_{11},P_{12},...,P_{1k}$ of the graph by adding the following points for each $x_1  \in \{0,...,n^k-1\}$.
\begin{itemize}
 \item $(x_1,0,0,0,..,0)$ with weight $W(x_1)$ equal to the weight of the $k$-clique $x_{11},x_{12},...,x_{1k}$ in parts $P_{11},P_{12},...,P_{1k}$
 \item $(x_1+1,0,0,0,..,0)$ with weight $-W(x_1)$
\end{itemize}
This creates $2 n^{k}$ points and adds weight $W(x_1)$ to any rectangle whose first coordinate matches $x_1$ without affecting any of the others. We work similarly for every coordinate $i$ from $2$ through $d$ accounting for the weight of all edges between parts $P_{ia}$ and $P_{ib}$ for all $i \in [d]$ and $a \neq b \in [k]$.  To take into account the additional edges, we show how to add edges between parts $P_{1a}$ and $P_{2b}$. For all $x_{1} \in n^{k-a} \{0,...,n^a-1\}$ and $x_{2} \in n^{k-b} \{0,...,n^b-1\}$ we add the points:
\begin{itemize}
 \item $(x_1,x_2,0,0,..,0)$ with weight $w$ equal to the weight of the edge between nodes $x_{1a}$ and $x_{2b}$ in parts $P_{1a}$ and $P_{2b}$.
 \item $(x_1+n^{k-a},x_2,0,0,..,0)$ with weight $-w$
 \item $(x_1,x_2+n^{k-b},0,0,..,0)$ with weight $-w$
 \item $(x_1+n^{k-a},x_2+n^{k-b},0,0,..,0)$ with weight $w$
\end{itemize}
This adds weight equal to the weight of the edge between nodes $x_{1a}$ and $x_{2b}$ in parts $P_{1a}$ and $P_{2b}$ for any rectangle with corner $\vec x$. This creates $O(n^{a+b})$ points. This number becomes too large if $a+b > k+1$. However, if this is the case we can instead apply the same construction in the part of the space where the numbers $x_1$ and $x_2$ appear reversed, i.e. by working with $x'_2 = \overline {x_1}$ and $x'_3 = \overline {x_2}$. For all $x'_{2} \in n^{a-1} \{0,...,n^{k+1-a}-1\}$ and $x'_{3} \in n^{b-1} \{0,...,n^{k+1-b}-1\}$ we add the points:
\begin{itemize}
 \item $(0,-x'_2,-x'_3,0,0,..,0)$ with weight $w$ equal to the weight of the edge between nodes $x'_{2(k+1-a)}$ and $x'_{3(k+1-b)}$ in parts $P_{1a}$ and $P_{2b}$.
 \item $(0,-x'_2-n^{a-1},-x'_3,0,..,0)$ with weight $-w$
 \item $(0,-x'_2,-x'_3-n^{b-1},0,..,0)$ with weight $-w$
 \item $(0,-x'_2-n^{a-1},-x'_3-n^{b-1},0,..,0)$ with weight $w$
\end{itemize}
This produces the identical effect as above creating $O(n^{2k+2-a-b})$ rectangles. If $a+b \ge k+1$ this adds at most $O(n^{k+1})$ points as desired. We add edges between any other 2 parts $P_{i,\cdot}$ and $P_{i',\cdot}$ by performing a similar construction as above.

The overall number of points in the instance is $O(n^{k+1})$ and this completes the proof of the theorem.

%\section{\APSP hardness for \MaxSubarray}
\section{Hardness for \MaxSubarray in $2$ dimensions}\label{sec_mr_hardness_2}
	In this section our goal is to show that, if we can solve the \MaxSubarray problem on a matrix of size $n \times n$ in time $O(n^{3-\eps})$, then we can solve the \NegTriangle problem in time $O(n^{3-\eps})$ on $n$ vertex graphs. It is known that a $O(n^{3-\eps})$ time algorithm for the \NegTriangle implies a $O(n^{3 \ - \ \nicefrac{\eps}{10}})$ time algorithm for the \APSP  problem \cite{williams2010subcubic}. Combining our reduction with the latter one, we obtain Theorem \ref{subarray_2d} from the introduction, which we restate here:

	\subarrayrestate*
	
	The generalization of this statement can be found in Section~\ref{mr-hardness-d}. Here we prove $2$-dimensional case first because the argument is shorter.

	Clearly, the \NegTriangle problem in equivalent to the \PosTriangle problem. In the remainder of this section we therefore reduce the problem of detecting whether a graph has a positive triangle to the \MaxSubarray problem. 

	We need the following intermediate problem:
	\begin{definition}[\MfComb] \label{mfcomb}
		Given a matrix $B \in \RR^{m \times m}$, output
		$$
			\max_{i, i', j, j' \in [m] \ : \ i\leq i' \text{ and }j \leq j'}B[i,j]+B[i',j']-B[i,j']-B[i',j].
		$$
	\end{definition}

	Our reduction consists of two steps:
	\begin{enumerate}
		\item Reduce the \PosTriangle problem on $n$ vertex graph to the \MfComb problem on $2n \times 2n$ matrix.
		\item Reduce the \MfComb problem on $n \times n$ matrix to the \MaxSubarray matrix of size $n \times n$.
	\end{enumerate}
	
	\subsection{\PosTriangle $\Rightarrow$ \MfComb}
		Let $A$ be the weighted adjacency matrix of size $n \times n$ of the graph and let $M$ be the largest absolute value of an entry in $A$. Let $M':=10M$ and $M'':=100M$.
		We define matrix $D \in \RR^{n \times n}:$
		$$
			D_{i,j}=
			\begin{cases}
				M'+M'' & \text{ if }i=j;\\
				M'' & \text{ otherwise}.
			\end{cases}
		$$

		We define matrix $B \in \RR^{2n \times 2n}:$
		$$
			B:=
			\begin{bmatrix}
				A     & -A^T \\
				-A^T  & D
			\end{bmatrix}.
		$$
		
		The reduction follows from the following lemma.
		\begin{lemma}
			Let $X$ be the weight of the max-weight rectangle in the graph corresponding to the adjacency matrix $A$.
			Let $Y$ be the output of the \MfComb algorithm when run on matrix $B$.
			The following equality holds:
			$$
				Y=X+M'+M''.
			$$
		\end{lemma}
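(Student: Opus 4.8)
The plan is to establish the two inequalities $Y\ge X+M'+M''$ and $Y\le X+M'+M''$ separately, where throughout I write $X=\max_{a,b,c}\bigl(A[a,b]+A[b,c]+A[c,a]\bigr)$ for the graph optimum that the statement names the max-weight rectangle. First I would fix coordinates on $B$: index its $2n$ rows and columns so that an index is \emph{low} if it lies in $[n]$ and \emph{high} otherwise; then the top-left block carries $A$, the two off-diagonal blocks carry $-A^T$, and the bottom-right block carries $D$. Since the \MfComb objective $B[i,j]+B[i',j']-B[i,j']-B[i',j]$ (with $i\le i'$, $j\le j'$) selects the four corners of an axis-aligned rectangle of $B$ with alternating signs, the entire argument reduces to deciding which such corner-rectangle is optimal and reading off the blocks its corners occupy.

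For the lower bound I would exhibit one good corner-rectangle for every triple $a,b,c$: take $i=a$ and $j=b$ low, and $i'=j'=n+c$ high. The ordering constraints $i\le i'$ and $j\le j'$ hold automatically because low indices are $\le n<$ high indices. Reading the blocks gives $B[i,j]=A[a,b]$, $B[i',j']=D[c,c]=M'+M''$, $B[i,j']=-A^T[a,c]=-A[c,a]$, and $B[i',j]=-A^T[c,b]=-A[b,c]$, so the objective equals $A[a,b]+A[b,c]+A[c,a]+M'+M''$. Maximizing over $a,b,c$ yields $Y\ge X+M'+M''$.

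For the upper bound I would classify an optimal corner-rectangle by the blocks of its four corners, exploiting the scale separation $M''=100M\gg M'=10M\gg M$. If both row indices lie in the same half, or both column indices do, then either no corner falls in the $D$-block (so the objective is a signed sum of at most four entries of $A$, hence $O(M)$) or exactly two corners fall in $D$ in a common row or column of $D$, so their $M''$-parts cancel and only the diagonal $M'$-bonuses survive (hence $O(M')$); both are strictly below $X+M'+M''\ (\ge 107M)$. Therefore an optimum must split its rows (one low, one high) and split its columns, placing exactly one corner, $B[i',j']$, in $D$. Writing $i=a$, $j=b$, $i'=n+p$, $j'=n+q$, the objective becomes $A[a,b]+D[p,q]+A[q,a]+A[b,p]$: when $p\ne q$ we have $D[p,q]=M''$ and the value is at most $3M+M''<X+M'+M''$, while when $p=q$ we have $D[p,p]=M'+M''$ and the value is exactly $A[a,b]+A[b,p]+A[p,a]+M'+M''\le X+M'+M''$. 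Combining the two bounds gives the claimed equality.

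The hard part will be the block case analysis in the upper bound: I must check in every same-half configuration that the $M''$ contributions genuinely cancel, so the large diagonal of $D$ cannot be harvested twice, and that the residual $O(M)$ and $O(M')$ terms never overtake the split configuration — precisely where the gaps $M''-M'$ and $M'-3M$ are used. A secondary point, needed only for the downstream reduction from \PosTriangle, is that the surviving diagonal case $p=q$ should decode to a triple of distinct vertices; when $a=b$ the diagonal entry $A[a,a]$ intrudes, so one forces $X$ to be attained by distinct vertices by setting the diagonal of $A$ sufficiently negative, which does not affect the equality $Y=X+M'+M''$ itself.
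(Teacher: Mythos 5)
Your proof is correct and takes essentially the same approach as the paper's: the paper likewise uses the scale separation $M''\gg M'\gg M$ first to force the optimal quadruple to satisfy $i,j\le n$ and $i',j'\ge n+1$ (your row/column split, the paper's first claim) and then to force $i'=j'$ (your $p=q$, the paper's second claim), after which the objective reads off as the triangle weight plus $M'+M''$. The only nit is that your same-half case analysis omits the configuration in which all four corners land in the $D$ block; there the $M''$ terms still cancel in signed pairs, leaving at most $2M'$ of diagonal bonuses, so your argument goes through unchanged.
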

		\begin{proof}
			Consider integers $i,j,i',j'$ that achieve a maximum in the \MfComb instance as per Definition \ref{mfcomb}.
			Our first claim is that $i,j\leq n$ and $i',j'\geq n+1$. If this is not true, we do not collect the weight $M''$ and the largest output that we can get is $\leq 4M'\leq 9M''/10$. Note that we can easily achieve a larger output with $i=j=1$ and $i'=j'=n+1$.
			
			Our second claim is that $i'=j'$. If this is not so, we do not collect the weight $M'$ and the largest output that we can get is $M''+4M\leq M''+M'/2$. Note that we can easily achieve a larger output with $i=j=1$ and $i'=j'=n+1$. Thus, we can denote $i'=j'=k+n$.
			
			Now, by the construction of $B$, we have
			$$
				B[i,j]+B[i',j']-B[i,j']-B[i',j]=A[i,j]+A[j,k]+A[k,i]+M'+M''.
			$$
			We get the equality we need.
		\end{proof}

	\subsection{\MfComb $\Rightarrow$ \MaxSubarray}
		Let $A' \in \RR^{(n+1) \times (n+1)}$ be a matrix defined by $A'[i,j]=A[i-1,j-1]$ if $i,j\geq 2$ and $A'[i,j]=0$ otherwise.
		
		Let $C \in \RR^{n \times n}$ be a matrix defined by $C[i,j]=A'[i,j]+A'[i+1,j+1]-A'[i,j+1]-A'[i+1,j]$.
		
		The reduction follows from the claim that the output of the \MaxSubarray on $C$ is equal to the output of the $\MfComb$ on $A'$.
		The claim follows from the following equality:
		$$
			\sum_{i=i'}^{i''}\sum_{j=j'}^{j''}C[i,j] \ = \ A'[i''+1,j''+1]+A'[i',j']-A'[i''+1,j']-A'[i',j''+1].
		$$

\section{Hardness for \MR for arbitrary number of dimensions}\label{mr-hardness-d}
We can extend the ideas used in the hardness proof of Theorem \ref{subarray_2d}, to prove the following theorem for the \MR problem on $d$ dimensional arrays.

\begin{restatable}{theorem}{subarray_HD}\label{subarray_HD}
For any constant $\eps>0$, an $O\left(n^{d+\lfloor d/2\rfloor-\eps}\right)$ time algorithm for the \MR problem on $d$-dimensional array, implies an $O\left(n^{d+\lfloor d/2\rfloor-\eps}\right)$ time algorithm for the \wClique{d+\lfloor d/2\rfloor} problem.
\end{restatable}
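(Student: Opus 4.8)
The plan is to interpose the \CliqueMatching problem between the \wClique{d+\lfloor d/2\rfloor} problem and the \MR problem, and to prove the theorem as the composition of two reductions: first that a fast \CliqueMatching algorithm yields a fast \wClique{d+\lfloor d/2\rfloor} algorithm (through a graph-minor argument), and second that a fast $d$-dimensional \MR algorithm yields a fast \CliqueMatching algorithm (generalizing the chain \PosTriangle $\Rightarrow$ \MfComb $\Rightarrow$ \MaxSubarray of Section~\ref{sec_mr_hardness_2}). Throughout I would keep every intermediate instance of size $\Theta(n)$ in each of its $d$ coordinates (respectively $2d$ parts), so that the exponent $d+\lfloor d/2\rfloor=\lfloor 3d/2\rfloor$ is preserved end to end and no polynomial slack beyond $\eps$ is lost.

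For the minor reduction, recall that the pattern graph $H$ of the \CliqueMatching problem is the clique on $2d$ vertices minus a perfect matching, i.e. the complete $d$-partite cocktail-party graph with parts $\{a_\ell,b_\ell\}$, $\ell\in[d]$. I would first prove the combinatorial lemma that $H$ has $K_{\lfloor 3d/2\rfloor}$ as a minor: group the $d$ parts into $\lfloor d/2\rfloor$ pairs and, for paired parts $\ell,\ell'$, use the three connected branch sets $\{a_\ell\}$, $\{a_{\ell'}\}$, $\{b_\ell,b_{\ell'}\}$ (plus one singleton branch set for the leftover part when $d$ is odd); since any two vertices from distinct parts are adjacent, these $3\lfloor d/2\rfloor+(d\bmod 2)=\lfloor 3d/2\rfloor$ branch sets are connected and pairwise adjacent. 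I would then turn this minor into a reduction: given a \wClique{d+\lfloor d/2\rfloor} instance on a weighted graph $G$, build a host graph with one part (a copy of $V(G)$) per pattern vertex of $H$, force the host vertices lying in a common branch set to agree by placing a large positive weight on equal-index host edges and a large negative weight on unequal ones, and place the weight of each $G$-edge on a single designated pattern-edge between the corresponding pair of branch sets (all remaining cross-branch pattern-edges getting weight $0$). The optimum \CliqueMatching value then equals the optimum \wClique{d+\lfloor d/2\rfloor} value up to the fixed additive offset of the forcing weights.

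The embedding of \CliqueMatching into \MR is the part I expect to be the main obstacle and is the genuine $d$-dimensional generalization of Section~\ref{sec_mr_hardness_2}. I would parametrize a $d$-dimensional subarray by its $2d$ corner coordinates (a low and a high endpoint per dimension) and identify these with the $2d$ vertices of the cocktail-party pattern, so that the $\ell$-th matched non-edge is exactly the (low, high) pair of dimension $\ell$. Generalizing \MfComb, a subarray sum telescopes, via the $d$-fold mixed finite difference, into the signed sum of an auxiliary array $A'$ over the $2^d$ corners of the box; the task is to design $A'$ on a doubled index range (a low half and a high half per dimension, as the $-A^{T}$/$D$ block structure accomplishes for $d=2$) so that large constants force the $\lceil d/2\rceil$ coincidences that realize the branch-set contractions of the minor, leaving $\lfloor 3d/2\rfloor$ free coordinates, while the signed $2^d$-corner sum evaluates to the total edge weight of the represented clique. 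The delicate point is that a full $d$-fold difference annihilates any summand of $A'$ depending on fewer than $d$ coordinates, so the pairwise edge weights cannot simply be scattered across the corners; they must instead be injected through the sign-flipping half-range block structure, each edge surviving in a single corner-type with the correct sign, exactly as the three triangle edges survive for $d=2$. I anticipate formalizing this through an intermediate ``central'' combination problem in the spirit of \MComb and \MRC, together with an induction on the dimension, all the while keeping the array side $\Theta(n)$ so that the final telescoping reduction of the combination to \MR is exponent-preserving.

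Composing the two reductions then finishes the proof: an $O\!\left(n^{d+\lfloor d/2\rfloor-\eps}\right)$ algorithm for \MR on a $d$-dimensional array of side $\Theta(n)$ solves the constructed \CliqueMatching instance in the same running time, which in turn solves the \wClique{d+\lfloor d/2\rfloor} instance on $n$ vertices in time $O\!\left(n^{d+\lfloor d/2\rfloor-\eps}\right)$, as claimed.
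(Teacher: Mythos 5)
Your proposal follows essentially the same route as the paper: your intermediate \CliqueMatching problem is exactly the paper's \MaxSub{2d} (Definition~\ref{subgraph_problem}), your cocktail-party minor with branch sets $\{a_\ell\},\{a_{\ell'}\},\{b_\ell,b_{\ell'}\}$ and agreement-forcing edge weights is precisely the paper's three-step construction of $G'$, and your doubled-index-range corner-sum machinery with sign flips and telescoping differences is the paper's chain \CMRS $\Rightarrow$ \CMRC $\Rightarrow$ \MRC $\Rightarrow$ \MR. The only (harmless) wrinkle is that you re-impose the branch-set coincidences inside the array construction, whereas the paper needs forcing only at the graph level, since the array's $2d$ corner coordinates directly encode the $2d$ free vertex choices of \MaxSub{2d} and each edge weight then survives in $2^{d-2}$ corners rather than one.
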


To prove the theorem, we introduce some notation and define some intermediate problems which will be helpful in modularizing the reduction. We will also be using the notation introduced here in Section \ref{sec_square}.
	% Definitions for this and the next section
	%\todo{Can we shift this notation and definition to preliminaries?}
	\begin{definition}[$d$-Tuple]
		$i$ is $d$-tuple if $i=(i_1,\ldots,i_d)$ for some integers $i_1,\ldots,i_d$.
	\end{definition}

	\paragraph*{Notation}
		Let $i$ be the $d$-tuple $(i_1,\ldots,i_d)$ and $\Delta$ be an integer.
		We denote the $d$-tuple $(\Delta\cdot i_1,\ldots,\Delta \cdot i_d)$ by $\Delta \cdot i$.
		Let $j$ be the $d$-tuple $j=(j_1,\ldots,j_d)$. We denote the $d$-tuple $(i_1+j_1,\ldots,i_d+j_d)$ by $i+j$.
		For $d$-tuple $i=(i_1,\ldots,i_d)$, we denote sum $|i_1|+\ldots+|i_d|$ by $\|i\|_1$. If $i$ is binary, $\|i\|_1$ denotes the number of ones in $i$.
		$j^t$ is the binary vector with only one entry equal to $1$: $j^t_t=1$. 
		That is, the $t$-th entry of $j^t$ is equal to $1$. For $d$-tuple $i$, we define \emph{type} $type(i)$ of $i$ as follows. 
		$type(i)$ is a binary vector such that for every $t \in [d]$, $type(i)_t=0$ iff $i_t<0$.
		Given two $d$-tuples $i=(i_1,\ldots,i_d)$ and $j=(j_1,\ldots,j_d)$, we denote $d$-tuple $(i_1\cdot j_1, \ldots, i_d\cdot j_d)$ by $i \times j$.

	\begin{definition}[$d$-Dimensional Array]
		We call $A$ an array in $d$ dimensions of side-length $n$ if it satisfies the following properties.
		\begin{itemize}
			\item $A$ contains $n^d$ real valued entries.
			\item $A[i]=A[i_1,\ldots,i_d]$ is the entry in $A$ corresponding to $d$-tuple $i=(i_1,\ldots,i_d) \in [n]^d$.
		\end{itemize}
	\end{definition}

	\begin{definition}[Boolean Cube]
		Let $B_d:=\{0,1\}^d$ be a set consisting of all $2^d$ binary $d$-tuples. We call it a Boolean cube in $d$ dimensions.
	\end{definition}

	\begin{definition}[Central $d$-Dimensional Array]
		We call $A$ a central array in $d$ dimensions of side-length $2n+1$ if it satisfies the following properties.
		\begin{itemize}
			\item $A$ contains $(2n+1)^d$ real valued entries.
			\item $A[i]=A[i_1,\ldots,i_d]$ is the entry in $A$ corresponding to $d$-tuple 
			$$
				i=(i_1,\ldots,i_d) \in \{-n,-n+1,\ldots,n-1,n\}^d.
			$$
		\end{itemize}
	\end{definition}

	\begin{definition}[\MaxSub{2k}] \label{subgraph_problem}
		We are given integer $k$ and weighted $2k$-partite graph $G=(V_1 \cup V_2 \ldots V_k \cup V_1' \cup V_2'\ldots V_k', \ E)$ with $2kn$ vertices. $|V_i|=|V_i'|=n$ for all $i \in [k]$. Choose $2k$ vertices $v_i \in V_i$, $v_i' \in V_i'$ and define
		$$
			W:=\sum_{i \in [k]} \sum_{j \in [k]\setminus\{i\}} w(v_i,v_j')+w(v_i,v_j)+w(v_i',v_j').
		$$
		$w(u,v)$ denotes the weight of edge $(u,v)$.
		In other words, $W$ is equal to the total edge-weight of $2k$-clique induced by $2k$ vertices $v_i$, $v_j'$ from which we subtract weight contributed by $k$ edges $(v_i,v_i')$. 
		The computation problem is to output maximum $W$ that we can obtain by choosing the $2k$ vertices.
	\end{definition}
	
	%Let $k=O(1)$.
	The trivial algorithm solves this problem in time $O(n^{2k})$. We can improve the runtime to $O(n^{2k-1})$. 
	Below we show that we cannot get runtime $O\left(n^{k+\lfloor k/2\rfloor-\Omega(1)}\right)$ unless we get a
	much faster algorithm for the \textsc{Max-Weight Clique} problem than what currently is known.
	
	\begin{definition}[\CMRS problem] \label{cmrs}
		Let $A$ be a central array in $d$ dimensions of side-length $2n+1$.
		We must output
		$$
			\max_{\substack{i \in [n]^d,\ \delta \in [2n]^d\\ \text{s.t. }\delta_1-i_1,\ldots,\delta_d-i_d\ge 0}}\sum_{j \in B_d}A[-i+\delta\times j].
		$$
	\end{definition}
	
	\begin{definition}[\CMRC problem] \label{cmrc}
		Let $A$ be a central array in $d$ dimensions of side-length $2n+1$.
		We must output
		$$
			\max_{\substack{i \in [n]^d,\ \delta \in [2n]^d\\ \text{s.t. }\delta_1-i_1,\ldots,\delta_d-i_d\ge 0}}\sum_{j \in B_d}(-1)^{\|j\|_1}\cdot A[-i+\delta\times j].
		$$
	\end{definition}
	
	\begin{definition}[\MRC problem] \label{mrc}
		Let $A$ be an array in $d$ dimensions of side-length $2n+1$.
		We must output
		$$
			\max_{i \in [n]^d,\ \delta \in [2n]^d}\sum_{j \in B_d}(-1)^{\|j\|_1}\cdot A[-i+\delta\times j].
		$$
	\end{definition}
	
	\begin{definition}[\MR problem] \label{mr}
		Let $A$ be an array in $d$ dimensions of side-length $n$.
		We must output
		$$
			\max_{i,\delta \in [n]^d}\sum_{i_1\le k_1\le \delta_1}\ldots \sum_{i_d\le k_d\le \delta_d}A[k_1,\ldots,k_d].
		$$
	\end{definition}
	
	Our goal is to show that, if we can solve \MR problem in time $O\left(n^{d+\lfloor d/2\rfloor-\eps}\right)$ for some $\eps>0$ on $d$-dimensional array, then we can solve \wClique{d+\lfloor d/2\rfloor} problem in time $O\left(n^{d+\lfloor d/2\rfloor-\eps}\right)$. Below, whenever, we refer to an array, it has $d$ dimensions.
	
	We will achieve this goal via a series of reductions:
	\begin{enumerate}
		\item Reduce \wClique{d+\lfloor d/2\rfloor} on $\left(d+\lfloor d/2\rfloor\right)n$ vertex graph to \MaxSub{2d} on $2dn$ vertex graph.
		\item Reduce \MaxSub{2d} problem on $2dn$ vertex graph to \CMRS on array with side-length $2dn+1$.
		\item Reduce \CMRS on array with side-length $2n+1$ to \CMRC on array with side-length $2n+1$.
		\item Reduce \CMRC on array with side-length $2n+1$ to \MRC on array with side-length $2n+1$.
		\item Reduce \MRC on array with side-length $2n+1$ to \MR on array with side-length $2n$.
	\end{enumerate}

	We can check that this series of reductions is sufficient for our goal. (For this, remember our assumption that $d=O(1)$.) Also, all reductions can be performed in time $O(n^d)$.
	
	\begin{remark}
		It is possible to show that there is no $O\left(n^{3d/2-\eps}\right)$ time algorithm for the \MR problem unless we have a much faster algorithm for \textsc{Max-Weight Clique} problem. The proof of this lower bound, however, is more complicated, and we omit it here.
	\end{remark}

	\subsection{\wClique{d+\lfloor d/2\rfloor} $\Rightarrow$ \MaxSub{2d}}
		Given an instance of the \wClique{d+\lfloor d/2\rfloor} problem on $\left(d+\lfloor d/2\rfloor\right)$-partite graph 
		$$
			G=(V_1 \cup \ldots \cup V_{d+\lfloor d/2\rfloor},E),
		$$
		we transform it into an instance of the \MaxSub{2d} on graph
		$$
			G'=(V_1 \cup \ldots \cup V_d \cup V_1' \cup \ldots \cup V_d',E')
		$$ 
		as follows. We build $G'$ out of $G$ in three steps.
		
		\paragraph*{Step 1} $G'$ is the same as $G$, except that we rename $V_{i+d}$ as $V_i'$ for $i = 1, \ldots, \lfloor d/2\rfloor$. Clearly, the max-weight clique in $G'$ is of the same weight as the max-weight clique in $G$.
		
		\paragraph*{Step 2} For $i=1,\ldots,\lfloor d/2\rfloor$, we do the following. We add a set of vertices 
		$$
			V_{i+\lfloor d/2\rfloor}':=\{v' \ : \ v \in V_i'\}
		$$ 
		to $G'$.
		For every $v \in V_i'$ and $u \in V_{i+\lfloor d/2\rfloor}'$, we set the weight of the edge $(v,u)$ as follows:
		$$
			w(v,u):=
			\begin{cases}
				0, & \text{ if }u=v';\\
				-M & \text{ otherwise},
			\end{cases}
		$$
		where $M=100\cdot d^{10} \cdot W$ and $W$ is the largest absolute value of the edge weight in $G$. $M$ is chosen to be a sufficiently large positive value.
		For every $u \in V_i$ and $v' \in V_{i+\lfloor d/2\rfloor}'$, we set the weight of the edge $(u,v')$ to be equal to the weight of the edge $(u,v)$: $w(u,v'):=w(u,v)$.
		We set all unspecified edge weights to be equal to $0$.
		
		\paragraph*{Step 3} If $2\lfloor d/2\rfloor<d$, we add a set of vertices $V_d'$ to $G'$, and we set all unspecified edges to have weight $0$.
		
		The correctness of the reduction follows from the following theorem.
		\begin{theorem}
			The maximum weight of $\left(d+\lfloor d/2\rfloor\right)$-clique in $G$ is equal to the maximum weight $2d$-subgraph of $G'$ (see Definition \ref{subgraph_problem}).
		\end{theorem}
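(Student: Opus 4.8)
The plan is to prove the equality by establishing the two inequalities separately, via a weight-preserving correspondence between $\left(d+\lfloor d/2\rfloor\right)$-cliques of $G$ and those $2d$-tuples of $G'$ that \emph{respect copies}. Write a clique of $G$ as $u_1,\ldots,u_d$ with $u_i\in V_i$, together with $w_1,\ldots,w_{\lfloor d/2\rfloor}$ with $w_i\in V_{d+i}$ (renamed to $V_i'$ in Step~1), and recall that the duplicate part $V_{i+\lfloor d/2\rfloor}'$ built in Step~2 contains a copy $v'$ of each $v\in V_i'$. Denote by $W_{\max}$ the largest absolute edge weight of $G$ (the quantity called $W$ in the definition of $M=100\,d^{10}W_{\max}$), and note that the objective $W$ of \MaxSub{2d} is a sum of $O(d^2)$ edges, hence has magnitude $O(d^2)W_{\max}\ll M$.

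For the direction \emph{(max $2d$-subgraph $\ge$ max clique)}, I would embed a clique of $G$ into $G'$ by setting $v_i:=u_i$ for $i\in[d]$, $v_i':=w_i$ for $i\le\lfloor d/2\rfloor$, and $v_{i+\lfloor d/2\rfloor}':=w_i'$ (the copy of $w_i$) for $i\le\lfloor d/2\rfloor$; if $d$ is odd, $v_d'$ is chosen arbitrarily in the dummy part $V_d'$. I would then compute $W$ for this tuple by classifying every edge of the induced $2d$-clique (minus the matching) according to the three construction steps, and verify term by term that the surviving nonzero contributions reproduce exactly the edge weights of the $G$-clique: the edges among $V_1,\ldots,V_d$ give the original $\binom d2$ edges; the edges among the renamed parts $V_1',\ldots,V_{\lfloor d/2\rfloor}'$ give the edges among $V_{d+1},\ldots,V_{d+\lfloor d/2\rfloor}$; and each cross edge $w(u_i,w_j)$ is recovered from the Step~1 edge $V_i$--$V_j'$ when $i\neq j$ and from the Step~2 edge $V_j$--$V_{j+\lfloor d/2\rfloor}'$ when $i=j$. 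Every remaining edge (between distinct duplicate parts, between a non-matching $V$-part and a duplicate part, or touching the dummy part) is unspecified and contributes $0$, and because copies were chosen, each $V_i'$--$V_{i+\lfloor d/2\rfloor}'$ edge contributes $0$ rather than $-M$. This exhibits a $2d$-tuple whose $W$ equals the clique weight.

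For the converse \emph{(max $2d$-subgraph $\le$ max clique)}, I would first argue that any optimal tuple respects copies. For each $i\le\lfloor d/2\rfloor$ the edge $V_i'$--$V_{i+\lfloor d/2\rfloor}'$ is counted in $W$ and equals $-M$ unless the chosen vertex of $V_{i+\lfloor d/2\rfloor}'$ is the copy of the chosen vertex of $V_i'$. Since the embedding above already attains $W\ge-O(d^2)W_{\max}$ while $M$ dwarfs the total magnitude $O(d^2)W_{\max}$ of all real edges, any tuple incurring even a single $-M$ penalty has $W\le-M+O(d^2)W_{\max}<-O(d^2)W_{\max}$ and is therefore strictly suboptimal. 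An optimal, copy-respecting tuple then reads off a clique of $G$ through $u_i:=v_i$ and $w_j:=v_j'$ for $j\le\lfloor d/2\rfloor$, and the identical edge classification shows its clique weight equals $W$. The two inequalities together give the stated equality.

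The main obstacle is the bookkeeping in the forward computation, and in particular ruling out a double count. Each ``diagonal'' edge $w(u_i,w_i)$ with $i\le\lfloor d/2\rfloor$ appears in $G'$ twice---once as the original Step~1 edge $V_i$--$V_i'$ and once as the Step~2 edge $V_i$--$V_{i+\lfloor d/2\rfloor}'$---so I must check that the first occurrence is exactly the matching edge $(v_i,v_i')$ that the definition of \MaxSub{2d} subtracts, leaving each such edge counted precisely once. Making this cancellation line up (neither over- nor under-counting) is the crux of the argument; once it is verified, all other contributions fall into place mechanically and the large-$M$ argument for the converse is routine.
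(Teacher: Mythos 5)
Your proof is correct and follows essentially the same route as the paper's: the $-M$ penalty forces any optimal $2d$-tuple to respect copies, and the Step~2 copy edges $V_i$--$V_{i+\lfloor d/2\rfloor}'$ restore exactly the diagonal weights $w(u_i,w_i)$ that the \MaxSub{2d} objective excludes as matching edges. You additionally spell out the two inequalities and the no-double-counting bookkeeping (your ``crux'') that the paper's much terser proof leaves implicit, but the underlying mechanism is identical.
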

		\begin{proof}
			Fix any $i$ in $\{1,\ldots,\lfloor d/2\rfloor\}$.
			If, when choosing maximum weight $2d$-subgraph of $G'$, we pick vertex $v \in V_i'$, then we must pick vertex $v'$ from $V_{i+\lfloor d/2\rfloor}'$ since, otherwise, we would collect cost $-M$ by the construction.
			Suppose we pick $v \in V_i'$ and $u \in V_i$. Since we have to pick $v'$ from $V_{i+\lfloor d/2\rfloor}'$ and since the weight of $(u,v)$ is equal to the weight of $(u,v')$, we must collect the weight of the edge $(u,v)$. Now the correctness of the claim follows from Definition \ref{subgraph_problem}.
		\end{proof}

	\subsection{\MaxSub{2d} $\Rightarrow$ \CMRS}
		Given a $2d$-partite graph 
		$$
			G=(V_1 \cup \ldots \cup V_d \cup V_1' \cup \ldots \cup V_d',E),
		$$
		we construct array $A$ with side-length $2n+1$ as follows.
		Let $i \in \{-n,\ldots,n\}^d$ be a $d$-tuple.
		We set $A[i]=-M'$, if there exists $r \in [d]$ such that $i_r=0$. We set $M'=100^{10d}\cdot W'$, where $W'$ is the largest absolute value among the edge weights in $G$. $M'$ is chosen to be a sufficiently large positive value. 
		We choose $d$ vertices $v_1,\ldots,v_d$ from $G$ as follows. If $i_k<0$, we set $v_k$ to be the $(-i_k)$-th
		vertex from set $V_k$. If $i_k>0$, we set $v_k$ to be the $i_k$-th vertex from set $V_k'$. We set $A[i]$ to be equal to the total weight of $d$-clique spanned by vertices $v_1,\ldots,v_d$.
		
		We need the following lemma.
		\begin{lemma} \label{max_subg}
			Fix $i \in [n]^d$ and $\delta \in [2n]^d$ such that $n \geq \delta_r-i_r>0$ for all $r \in [d]$.
			For every $r \in [d]$, set $u_r$ to be the $i_r$-th vertex from $V_r$ and $u_r'$ to be the $(\delta_r-i_r)$-th
			vertex from $V_r'$. Then
			$$
				\sum_{j \in B_d}A[-i+\delta \times j]=2^{d-2}\cdot w,
			$$
			where $w$ is the total weight of $2d$-subgraph spanned by vertices $u_1,\ldots,u_d,u_1',\ldots,u_d'$.
		\end{lemma}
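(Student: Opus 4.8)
The plan is to evaluate the sum directly, by expanding each array entry into the clique weight it encodes and then interchanging the order of summation. First I would check that the summation never touches a penalty entry: since $i_r\in[n]$ we have $i_r\ge 1$, and the hypothesis gives $0<\delta_r-i_r\le n$, so the $r$-th coordinate of the index $-i+\delta\times j$ equals $-i_r<0$ when $j_r=0$ and $\delta_r-i_r>0$ when $j_r=1$. In particular no coordinate is ever $0$, so by the construction of $A$ no summand lands on an entry of value $-M'$, and each $A[-i+\delta\times j]$ is a genuine $d$-clique weight. Reading off the vertex-selection rule, coordinate $r$ contributes the vertex $u_r$ (the $i_r$-th vertex of $V_r$) when $j_r=0$ and the vertex $u_r'$ (the $(\delta_r-i_r)$-th vertex of $V_r'$) when $j_r=1$; the bound $\delta_r-i_r\le n$ guarantees $u_r'$ is well defined. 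Hence $A[-i+\delta\times j]$ is exactly the total edge weight of the $d$-clique on the $d$ selected vertices.

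Next I would sum over $j\in B_d$ and swap the two summations to group the contribution by vertex pair. The key combinatorial observation is that, for a fixed pair $r<s$, the edge contributed to the clique weight depends only on the two bits $j_r,j_s$, while the remaining $d-2$ coordinates of $j$ range freely over $\{0,1\}$. Thus each of the four patterns $(j_r,j_s)\in\{0,1\}^2$ occurs exactly $2^{d-2}$ times, and collecting terms yields
\begin{align*}
\sum_{j \in B_d} A[-i+\delta\times j] \;=\; 2^{d-2}\sum_{r<s}\bigl[\,w(u_r,u_s)+w(u_r,u_s')+w(u_r',u_s)+w(u_r',u_s')\,\bigr].
\end{align*}

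Finally I would identify the bracketed double sum with the quantity $w$ of Definition \ref{subgraph_problem}. Summing, over each unordered pair $\{r,s\}$, the four edges joining $\{u_r,u_r'\}$ to $\{u_s,u_s'\}$ accounts for every edge of the $2d$-clique on $u_1,\dots,u_d,u_1',\dots,u_d'$ except the $d$ ``matching'' edges $(u_r,u_r')$, which are precisely the edges Definition \ref{subgraph_problem} subtracts; so this sum is exactly the $2d$-subgraph weight $w$, giving the claimed identity. The only care needed is the bookkeeping: confirming that the index arithmetic keeps every coordinate nonzero (so the $-M'$ entries are avoided), that the pair-multiplicity is uniformly $2^{d-2}$ rather than depending on $r,s$, and—in the last step—that it is exactly the matching edges $(u_r,u_r')$, and only those, that are omitted. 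Everything else is a routine reindexing, so I expect the matching to Definition \ref{subgraph_problem} to be the main place to be precise.
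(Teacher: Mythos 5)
Your proof is correct and takes essentially the same route as the paper: the paper's own proof of this lemma is just the one-line remark that it ``follows from Definition \ref{subgraph_problem} and the construction of array $A$,'' and your argument is precisely the omitted verification. The three points you check---no index of the form $-i+\delta\times j$ ever has a zero coordinate (so the $-M'$ entries are avoided), each bit pattern $(j_r,j_s)$ occurs exactly $2^{d-2}$ times, and the only edges of the $2d$-clique never generated are the $d$ matching edges $(u_r,u_r')$---are exactly the bookkeeping the paper leaves implicit.
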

		\begin{proof}
			Follows from Definition \ref{subgraph_problem} and the construction of array $A$.
		\end{proof}
		
		We observe that, as we maximize over $d$-tuples $i$ and $\delta$ (as per Definition \ref{cmrs}), we never choose $i$ and $\delta$ such that there exists $r$ with $\delta_r-i_r=0$ so as to not collect $-M'$. Also, we see that, as we maximize over all $i$ and $\delta$, we maximize over all $2d$-subgraphs by Lemma \ref{max_subg}. The output of \CMRS problem on $A$ is therefore equal to the maximum weight of a $2d$-subgraph in $G$ multiplied by $2^{d-2}$. This finishes the description of the reduction.

	\subsection{\CMRS $\Rightarrow$ \CMRC}
		Let $A$ be the input array for the \CMRS problem.
		We construct $A'$ as follows.
		For every $i \in \{-n,\ldots,n\}^d$:
		$$
			A'[i]:=
			\begin{cases}
				A[i] & \text{ if }|\{r \ : \ i_r \geq 0 \}|\text{ is even},\\
				-A[i] & \text{ otherwise}.
			\end{cases}
		$$
		Our claim is that the output of the \CMRC on $A'$ is equal to the output of \CMRS on $A$. This follows by the definitions of the both problems.
	
	\subsection{\CMRC $\Rightarrow$ \MRC}
		Let $A$ be the input array for the \CMRC problem.
		Let $W''$ be the largest absolute value of an entry in $A$.
		We define $M''=100^{10d}\cdot W''$ to be large enough positive value.
		
		We define $A'$ as follows. First we set $A':=A$. Then, for every $d$-tuple $i$ with $i_r<0$ for all $r \in [d]$, we increase $A'[i]$ by $M''$.
		
		The reduction follows from the following lemma.
		\begin{lemma}
			Let $X$ be the output of the \CMRC on input $A$.
			Let $X'$ be the output of the \MRC on input $A'$.
			Then equality $X'=X+M''$ holds.
		\end{lemma}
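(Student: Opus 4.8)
The plan is to show that the passage from $A$ to $A'$ adds exactly $M''$ to the objective of every \emph{central} configuration (one with $\delta_r\ge i_r$ for all $r$) while leaving the objective of every non-central configuration unchanged. This forces the unconstrained \MRC maximum to coincide with the constrained \CMRC maximum shifted by $M''$. First I would fix a pair $(i,\delta)$ with $i\in[n]^d$, $\delta\in[2n]^d$ and determine which of the $2^d$ corners $-i+\delta\times j$, for $j\in B_d$, are the tuples with all entries negative, since those are precisely the entries to which the construction of $A'$ adds $M''$. The $r$-th coordinate of $-i+\delta\times j$ equals $-i_r$ when $j_r=0$ and $\delta_r-i_r$ when $j_r=1$; because $i_r\ge 1$ the former is always negative, while the latter is negative exactly when $\delta_r<i_r$. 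Hence, writing $S:=\{r:\delta_r<i_r\}$, the corner indexed by $j$ has all coordinates negative if and only if its support $\{r:j_r=1\}$ is contained in $S$.

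The heart of the argument is then a one-line inclusion--exclusion computation. The difference between the \MRC objective evaluated on $A'$ and on $A$ at $(i,\delta)$ is the signed total of the $M''$ bumps over the all-negative corners,
$$
  M''\sum_{\substack{j\in B_d\\ \{r:j_r=1\}\subseteq S}}(-1)^{\|j\|_1}
  \ = \ M''\sum_{T\subseteq S}(-1)^{|T|},
$$
and the inner sum equals $1$ if $S=\varnothing$ and $0$ if $S\neq\varnothing$. Since $S=\varnothing$ is exactly the centrality condition $\delta_r\ge i_r$ for all $r$, this proves that the \MRC objective on $A'$ at $(i,\delta)$ equals the objective on $A$ at $(i,\delta)$ plus $M''$ when $(i,\delta)$ is \CMRC-feasible, and equals the (unchanged) objective on $A$ at $(i,\delta)$ otherwise. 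Restricted to feasible pairs the objective on $A$ is precisely the \CMRC objective, so maximizing over feasible pairs yields $X+M''$.

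The only remaining point, and the one I expect to require the chosen magnitude of $M''$, is to rule out that the unconstrained \MRC maximum on $A'$ is attained at a non-central pair. For any such pair the value is an alternating sum of $2^d$ entries of $A$, hence bounded in absolute value by $2^d W''$, whereas the central branch delivers at least $X+M''\ge -2^dW''+M''$ (using that at least one feasible pair exists, e.g.\ $i=\delta=(1,\ldots,1)$, so $X\ge -2^dW''$). With $M''=100^{10d}\cdot W''$ this lower bound strictly exceeds $2^dW''$, so the maximum is realized on a central configuration and $X'=X+M''$, as claimed. The main obstacle is really just recognizing the exact cancellation in the displayed sum: once the inclusion--exclusion collapse is in hand, the size of $M''$ only serves to guarantee that the maximizing box is central, and the rest is bookkeeping.
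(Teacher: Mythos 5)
Your proposal is correct and follows essentially the same route as the paper: show that the $M''$ bumps cancel (via inclusion--exclusion over corners with support in $S=\{r:\delta_r<i_r\}$) whenever the pair is non-central, add exactly $+M''$ when it is central, and then use the magnitude $M''=100^{10d}W''$ to force the unconstrained maximum onto a central pair. Your write-up is in fact slightly more careful than the paper's, which merely asserts that the $M''$ values ``cancel out among themselves'' and exhibits a feasibility witness ($i_k=-n$, $\delta_k=0$) lying outside the stated domains $i\in[n]^d$, $\delta\in[2n]^d$, whereas your witness $i=\delta=(1,\ldots,1)$ is legitimate.
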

		\begin{proof}
			Consider \MRC on input $A'$.
			
			We claim that a maximum cannot be achieved for $d$-tuples $i$ and $\delta$ such that there exists $r \in [d]$ with $\delta_r-i_r<0$. Suppose that there are such $i$ and $\delta$ that achieve a maximum. By the construction of $A'$, and because $\delta_r-i_r<0$, all values $M''$ that we collect will cancel out among themselves. We will then be left with value, at most, $X'\leq |B_d|W''\leq {1 \over 10}M''$. We can, however, achieve a value of at least ${9 \over 10}M''>{1 \over 10}M''$ by setting $i_k=-n$ and $\delta_k=0$ for all $k \in [d]$.
			
			By the discussion in the previous paragraph, a maximum must be achieved for $i$ and $\delta$ such that $\delta_r-i_r\geq 0$ for all $r \in [d]$.
			Now this is exactly the condition that we impose on $i$ and $\delta$ in the statement of the \CMRC problem. By the construction of $A'$, we get equality $X'=X+M''$.
		\end{proof}
		
	\subsection{\MRC $\Rightarrow$ \MR}\label{mrctomr}
	Let $A$ be the input $d$-dimensional array with side-length $2n+1$ to the \MRC problem. Given $A$, we produce $d$-dimensional array $A'$ of side-length $2n$ such that the output of the \MR problem on $A'$ is equal to the output of the \MRC problem on $A$.
		We construct $A'$ as follows. For every $d$-tuple $i \in [2n]^d$, we set
		$$
			A'[i]=\sum_{j \in B_d}(-1)^{\|j\|_1}\cdot A[i+j].
		$$
		We can check equality
		\begin{equation} \label{mcmr}
			%\begin{aligned}
			 \sum_{i_1\le k_1\le i_1+\delta_1}\ldots \sum_{i_d\le k_d\le i_d+\delta_d}A'[k_1,\ldots,k_d] 
			=  \sum_{j \in B_d}(-1)^{\|j\|_1}\cdot A[i+(\delta+\underline{1})\times j].
			%\end{aligned}
		\end{equation}
		where $\underline{1}$ is the $d$-tuple $(1,\ldots,1)$.
		In the \MR problem, we maximize l.h.s. of \eqref{mcmr} over $d$-tuples $i$ and $d$-tuples $\delta \in [2n]^d$. In the \MRC problem, we maximize r.h.s. of \eqref{mcmr} over $d$-tuples $i$ and $d$-tuples $\delta \in [2n]^d$. The reduction follows from the definitions of the computational problems.	
	
		%The reduction is analogous to the one in Section \ref{mctomc}.
		%\todo{Forward referencing here. Is it fine?}

\section{Hardness for \MCube problem} \label{sec_square}

When the side-lengths of the subarray we are looking for are restricted to be equal, the problem becomes slightly easier and there exists a $O\left(n^{d+1}\right)$ algorithm for solving it. In this section, we show a matching lower bound for the \MCube problem.

\begin{restatable}{theorem}{square_hardness}\label{square_hardness}
For any constant $\eps>0$, an $O\left(n^{d+1-\eps}\right)$ time algorithm for the \MCube problem on a $d$-dimensional array implies an $O\left(n^{d+1-\eps}\right)$ time algorithm for the \wClique{d+1} problem.
\end{restatable}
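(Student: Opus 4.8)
The plan is to follow the modular strategy of Section~\ref{mr-hardness-d}, specialized to the square (common side-length) constraint. A square subarray is described by its corner $i=(i_1,\ldots,i_d)$ together with a \emph{single} side length $s$, so it has exactly $d+1$ free parameters, as opposed to the $2d$ parameters of a general box. These $d+1$ parameters are precisely what I would put in correspondence with the $d+1$ vertices of a \wClique{d+1} instance: each positional coordinate $i_r$ selects a vertex in one part of the graph, and the common side length $s$ selects the $(d+1)$-st vertex. This is the structural reason the square problem is tied to $(d+1)$-cliques rather than to the larger cliques of size $d+\lfloor d/2\rfloor$ that arise for the unrestricted \MR problem in Section~\ref{mr-hardness-d}.

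Concretely, I would introduce the square analogues of the intermediate problems of Section~\ref{mr-hardness-d}, namely \CMaxSum, \CMC and \MComb, obtained from \CMRS, \CMRC and \MRC by replacing the per-coordinate offset vector $\delta\in[2n]^d$ with a single scalar side length $s$, so that the $2^d$ evaluation points become $-i+s\cdot j$ for $j\in B_d$. The reduction then proceeds as \wClique{d+1}$\Rightarrow$(intermediate subgraph problem)$\Rightarrow$\CMaxSum$\Rightarrow$\CMC$\Rightarrow$\MComb$\Rightarrow$\MCube. The last three steps are routine adaptations of the corresponding steps in Section~\ref{mr-hardness-d}: flipping the signs of the entries on the appropriate orthants turns the plain central sum \CMaxSum into the alternating central combination \CMC; adding a large bonus on the all-negative $d$-tuples removes the centrality constraint and yields \MComb; and replacing the array by its discrete mixed difference over $B_d$ turns the alternating combination into an ordinary square-subarray sum, which is exactly the identity \eqref{mcmr} with the vector $\delta+\underline 1$ collapsed to the common value $s+1$. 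Each transformation runs in $O(n^d)$ time and changes the side length of the array only by constant factors, so an $O(n^{d+1-\eps})$ algorithm for \MCube would yield an $O(n^{d+1-\eps})$ algorithm for \wClique{d+1}.

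The main obstacle is the first reduction, \wClique{d+1}$\Rightarrow$\CMaxSum, and it is caused entirely by the common side length. When $i_r$ encodes a vertex $u_r$ and $s$ encodes $u_{d+1}$, the $j$-th evaluation point has $r$-th coordinate $-i_r$ when $j_r=0$ and $s-i_r$ when $j_r=1$; thus the side length never appears in isolation but always coupled to a position as $s-i_r$. The pairwise edges $w(u_r,u_{r'})$ with $r,r'\le d$ are therefore read off cleanly from the all-negative corner, but the $d$ edges $w(u_r,u_{d+1})$ incident to the side-length vertex cannot be encoded directly: any single array entry at the relevant corner only sees a coordinate of the form $s\pm i_r$, and a short calculation shows that an arbitrary function of $(i_r,s)$ is \emph{not} expressible as a signed combination of values depending on $s\pm i_r$. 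I would resolve this exactly as in Steps~1 and~2 of Section~\ref{mr-hardness-d}: rather than reducing from \wClique{d+1} directly, I reduce it to an intermediate maximum-weight subgraph problem---a square-constrained variant of the \MaxSub{2d} problem of Definition~\ref{subgraph_problem}, in which the primed and unprimed vertices of each coordinate are linked through the common parameter $s$---and use auxiliary parts together with $-M$ forcing penalties to route each otherwise-unencodable edge through a vertex living in a \emph{different} coordinate. Setting $A[\,\cdot\,]$ to the weight of the $d$-clique spanned by the vertices decoded from the signs of the coordinates (as in the \MaxSub{2d}$\Rightarrow$\CMRS step) would then complete the reduction to \CMaxSum. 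I expect the delicate part to be verifying that the forced optimum of this subgraph problem equals the maximum-weight $(d+1)$-clique while the linking constraint remains consistent with the rigid index structure that the common side length imposes on the primed parts; getting the accounting of free parameters to land at exactly $d+1$, rather than at $d$, is the crux of the argument.
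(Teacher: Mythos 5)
Your scaffolding matches the paper exactly: the square analogues \CMaxSum, \CMC and \MComb are precisely the paper's intermediate problems, your last three reductions are the paper's steps 2--4, and the obstacle you isolate in the first step --- that the common side length never appears in isolation, so edges incident to the ``side-length vertex'' cannot be read off any single corner under the naive decoding --- is the right one. The gap is in your resolution of that obstacle. The paper does \emph{not} pass through any square-constrained variant of the \MaxSub{2d}; there is no subgraph intermediate at all in Section~\ref{sec_square}. Instead it reduces the \wClique{d+1} problem \emph{directly} to \CMaxSum via a re-parametrization that your proposal is missing: the $(d+1)$-st clique vertex is taken to be $\Delta-\|i\|_1$, not $\Delta$ and not $\Delta-i_r$. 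With this choice, each of the $d$ corners $-i+\Delta\cdot j^t$ can decode that vertex from its \emph{own} coordinates (subtract $\sum_{r\neq t}i_r$ from its unique nonnegative coordinate $\Delta-i_t$), so the cell there can legitimately store the full weight of the $d$-clique on $\{i_r\}_{r\neq t}\cup\{\Delta-\|i\|_1\}$, while the all-negative corner stores the $d$-clique on $i_1,\ldots,i_d$ and the remaining $2^d-(d+1)$ corners are penalized by $-c$. Summing the $d+1$ good corners then counts every edge of the $(d+1)$-clique exactly $d-1$ times, which is the paper's identity $M_A=(d-1)M_G-(2^d-(d+1))c$.

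This mechanism crucially uses array cells as $d$-ary weight oracles: a single cell's value depends jointly on all $d$ of its coordinates. A subgraph problem in the style of Definition~\ref{subgraph_problem} has only \emph{pairwise} weights, and under the square constraint the primed indices are rigid functions $\Delta-i_r$ of the $d+1$ free parameters. By your own observation, no pairwise function of two such indices can recover both endpoints of an edge incident to a consistently decodable extra vertex; for the choice $u_{d+1}=\Delta-\|i\|_1$ that actually works, the situation is even more extreme, since that vertex depends on all coordinates at once and hence is invisible to every pair. Moreover, the $-M$ copy-forcing trick of Section~\ref{mr-hardness-d}, Steps 1--2, has nothing to act on here: it needs spare, freely indexed parts whose vertices can be pinned to copies, whereas in the square setting all $2d$ part-indices are already determined by the $d+1$ parameters, so penalties can only impose relations among the $i_r$ and $\Delta$, which destroys generality. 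So the ``delicate part'' you defer is exactly where your plan breaks; the missing idea is the $\Delta-\|i\|_1$ re-parametrization combined with storing full $d$-clique weights in the cells, which is what lets the paper skip the subgraph intermediate entirely.
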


	To prove Theorem \ref{square_hardness}, we define some intermediate problems which will be helpful in modularizing the reduction.
	%\todo{\CMaxSum vs \CMRS difference is not clear in the naming. How about making \CMaxSum Central Max Square Sum instead?}
\begin{definition}[\CMaxSum problem]
	Let $A$ be a central array in $d$ dimensions of side-length $2n+1$.
	We must output
	$$
		\max_{\substack{i \in [n]^d,\ \Delta \in [2n]\\ \text{s.t. }\Delta-i_1,\ldots,\Delta-i_d\ge 0}}\sum_{j \in B_d}A[-i+\Delta\cdot j].
	$$
\end{definition}

\begin{definition}[\CMC problem]
	Let $A$ be a central array in $d$ dimensions of side-length $2n+1$.
	We must output
	$$
		\max_{\substack{i \in [n]^d,\ \Delta \in [2n]\\ \text{s.t. }\Delta-i_1,\ldots,\Delta-i_d\ge 0}}\sum_{j \in B_d}(-1)^{\|j\|_1}\cdot A[-i+\Delta\cdot j].
	$$
\end{definition}

\begin{definition}[\MComb problem]
	Let $A$ be an array in $d$ dimensions of side-length $n$.
	We must output
	$$
		\max_{i \in [n]^d,\ \Delta \in [n]}\sum_{j \in B_d}(-1)^{\|j\|_1}\cdot A[i+\Delta\cdot j].
	$$
\end{definition}

\begin{definition}[\MCube problem]
	Let $A$ be an array in $d$ dimensions of side-length $n$.
	We must output
	$$
		\max_{i \in [n]^d,\ \Delta \in \{0,\ldots,n-1\}}\sum_{i_1\le k_1\le i_1+\Delta}\ldots \sum_{i_d\le k_d\le i_d+\Delta}A[k_1,\ldots,k_d].
	$$
\end{definition}

We note that there is a simple algorithm for \MCube problem that runs in time $O(n^{d+1})$.

	Our goal is to show that, if we can solve \MaxSubarraySquare in time $O(n^{d+1-\eps})$ for some $\eps>0$ on $d$-dimensional array, where $d\geq 3$ is a constant, then we can solve \wClique{d+1} in time $O(n^{d+1-\eps})$. Below, whenever we refer to an array, it has $d$ dimensions.

	We will show this by a series of reductions:
	\begin{enumerate}
		\item Reduce \wClique{d+1} on $n$ vertex graph to \CMaxSum problem on array with side-length $2dn+1$.
		\item Reduce \CMaxSum problem on array with side-length $2n+1$ to \CMC problem on array with side-length $2n+1$.
		\item Reduce \CMC problem on array with side-length $2n+1$ to \MComb problem on array with side-length $2n+1$.
		\item Reduce \MComb problem on array with side-length $n$ to \MCube problem on array with side-length $n-1$.
	\end{enumerate}

	We can check that this series of reductions is sufficient for our goal. All reductions can be performed in time $O(n^d)$.

	\subsection{\wClique{d+1} $\Rightarrow$ \CMaxSum}
		Given a weighted graph $G=(V,E)$ on $n$ vertices, our goal is to produce a $d$-dimensional array $A$ with side-length $2dn+1$ so that the following holds. If we solve the \CMaxSum problem on $A$, we can infer the maximum total edge-weight of $(d+1)$-clique in $G$ in constant time.

		We set $c'$ to be equal to the maximum absolute value of the edge-weights in $G$. We set $c:=100|V|^4c'$, which is much larger than the total edge weight of the graph. We define the following $d$-dimensional array $D$ of side-length $n$. For every $d$-tuple $i \in [n]^d$, we set $D[i]$ by the following rules.
		\begin{enumerate}
			\item If there are $r\neq t \in [d]$ such that $i_r=i_t$, set $D[i]=-c$.
			\item Otherwise, set $D[i]$ to be equal to the total edge weight of $d$-clique with vertices $i_1,\ldots,i_d$.
		\end{enumerate}

		Using array $D$, we construct array $A$ in the following way:
		\begin{itemize}
			\item Initially, set every entry of $A$ to be equal to $-c$.
			\item For every $i \in [n]^d$, set $A[-i]=D[i]$.
			\item For every $t \in [d]$ and $i \in [n]^d$, set 
				\begin{equation} \label{defa}
					A[-(i-i_t\cdot j^t)+\|i\|_1\cdot j^t]=D[i].
				\end{equation}
		\end{itemize}

		The following theorem completes our reduction.
		\begin{theorem}
			Let $M_A$ be the output of the \CMaxSum problem with input array $A$. Let $M_G$ be the max-weight $(d+1)$-clique in $G$. Then
			$$
				M_A=(d-1)M_G-(2^d-(d+1))c.
			$$
		\end{theorem}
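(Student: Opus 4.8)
The plan is to read off the identity $M_A=(d-1)M_G-(2^d-(d+1))c$ by classifying the $2^d$ corners appearing in the \CMaxSum objective and matching the ``useful'' ones to the $d+1$ sub-$d$-cliques of a $(d+1)$-clique. First I would observe that the only entries of $A$ that are not $-c$ are those placed at the all-negative positions $-i'$ and those placed by \eqref{defa}, which have exactly one positive coordinate. In the objective $\sum_{j\in B_d}A[-i+\Delta\cdot j]$ the term for $j$ lands on a position whose number of positive coordinates equals $\|j\|_1$ (since $-i$ is strictly negative and $\Delta-i_r\ge 0$). Hence every $j$ with $\|j\|_1\ge 2$ is forced to hit $-c$, contributing exactly $-(2^d-(d+1))c$ no matter how $i,\Delta$ are chosen; all the ``action'' is in the $d+1$ corners with $\|j\|_1\le 1$.

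Second, I would establish the key alignment. Writing $i=(a_1,\ldots,a_d)$ and $v:=\Delta-\|i\|_1$, a direct check shows the corner $j=\vec 0$ lands on $A[-i]=D[i]$, the $d$-clique on $a_1,\ldots,a_d$, while for each $t$ the corner $j=j^t$ lands on $(-i_1,\ldots,-i_{t-1},\|i'\|_1,-i_{t+1},\ldots,-i_d)$ with $\|i'\|_1=\Delta-i_t$; comparing with \eqref{defa} this is exactly the entry $D[i']$ for the tuple $i'$ obtained from $i$ by replacing its $t$-th coordinate with $v$. Setting $a_{d+1}:=v$ (equivalently $\Delta=a_1+\cdots+a_{d+1}$), the $d+1$ useful corners thus realize precisely the $d+1$ sub-$d$-cliques of the clique on $\{a_1,\ldots,a_{d+1}\}$, each obtained by deleting one vertex. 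Since every edge of a $(d+1)$-clique survives the deletion of exactly $(d+1)-2=d-1$ vertices, summing the weights of these $d+1$ sub-cliques yields $(d-1)$ times the weight of the clique on $\{a_1,\ldots,a_{d+1}\}$ by double counting.

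For the lower bound $M_A\ge (d-1)M_G-(2^d-(d+1))c$, I would take $a_1,\ldots,a_{d+1}$ to be the vertices of the max-weight $(d+1)$-clique and check feasibility: each $a_k\in[n]\subseteq[dn]$, $\Delta=\sum_k a_k\le (d+1)n\le 2dn$, and $\Delta-i_r=a_{d+1}+\sum_{k\ne r}a_k\ge 0$, while the positive coordinate $\|i'\|_1\le dn$ keeps all target positions inside the array of side-length $2dn+1$. This configuration attains value $(d-1)M_G-(2^d-(d+1))c$.

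The main obstacle is the matching upper bound, i.e. ruling out that a misaligned $i,\Delta$ does better; this rests entirely on the magnitude $c=100|V|^4c'$. Any useful corner ($\|j\|_1\le 1$) that fails to hit a genuine clique entry contributes $-c$, whereas the total weight the useful corners can ever contribute is at most $(d+1)\binom{d}{2}c'\ll c$, so even a single extra $-c$ cannot be compensated; the optimizer is therefore forced to make all $d+1$ useful corners land on genuine $D$-entries. This requires $i\in[n]^d$ with distinct coordinates and $v=\Delta-\|i\|_1\in[n]$ distinct from every $a_k$ — otherwise rule~1 in the construction of $D$ produces a repeated coordinate and an extra $-c$ — so the optimal configuration corresponds to an honest $(d+1)$-clique, whose sub-clique sum equals $(d-1)$ times its weight, at most $(d-1)M_G$. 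The only delicate points, both settled by the choice of $c$, are verifying that the forced-$(-c)$ count is exactly $2^d-(d+1)$ and that no repeated-coordinate or out-of-$[n]$ choice can ever recover the lost $c$; combining the two bounds gives the stated equality.
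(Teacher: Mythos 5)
Your proposal is correct and takes essentially the same route as the paper's proof: you split the $2^d$ corners into the $d+1$ useful ones with $\|j\|_1\le 1$ plus the $2^d-(d+1)$ corners forced to value $-c$, use \eqref{defa} to identify the corner $j^t$ with the tuple obtained by replacing $i_t$ with $\Delta-\|i\|_1$, and then prove the two inequalities exactly as the paper's two cases do (forcing all useful corners onto genuine $D$-entries for the upper bound, and instantiating $i,\Delta$ from the max-weight clique for the lower bound). Your only additions are presentational: the explicit double-counting argument behind the factor $d-1$ and the feasibility and magnitude checks that the paper leaves implicit.
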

		\begin{proof}
			Remember the definition of the \CMaxSum problem. We want to maximize the sum
			$$
				\sum_{j \in B_d}A[-i+\Delta\cdot j]
			$$
			over all choices of $d$-tuple $i$ and integer $\Delta$. We have an additional constraint that as we range over all $j \in B_d$, $type(-i+\Delta\cdot j)$ should range over all elements in $B_d$. We notice that $A[i]=-c$ if there are two $r \neq t \in [d]$ with $i_r,i_t\geq 0$. This means that the quantity we are maximizing
			\begin{align*}
				& \sum_{j \in B_d}A[-i+\Delta\cdot j]\\
				=& A[-i]+\left(\sum_{t \in [d]}A[-i+\Delta\cdot j^t]\right)-(2^d-(d+1))c.
			\end{align*}
			
			To prove the theorem, it suffices to show
			$$
				(d-1)M_G=M_A':=\max_{\substack{i \in [n]^d,\ \Delta \in [2n]\\ \text{s.t. }\Delta-i_1,\ldots,\Delta-i_d\ge 0}}A[-i]+\sum_{t \in [d]}A[-i+\Delta\cdot j^t].
			$$
			The equality follows from the following two cases.
			
			\paragraph*{Case $(d-1)M_G\geq M_A'$} If $A[-i]=-c$ or $A[-i+\Delta\cdot j^t]=-c$ (for some $t$), then we immediately get the inequality, by definitions of $A$, $D$ and $c$. We therefore assume that $A[-i]$ and each $A[-i+\Delta\cdot j^t]$ (for every $t \in [d]$) is equal to $D[i']$ for some $d$-tuple $i'$. Moreover, each one of these $d+1$ integers $D[i']$ is equal to the total edge-weight of $d$-clique induced by vertices $i'_1,\ldots,i'_d$ in $G$, since, otherwise, $D[i']=-c$ (see the definition of array $D$). By the construction, we have equality that $A[-i]=D[i]$. Fix $t$, and consider $A[-i+\Delta\cdot j^t]$. $A[-i+\Delta\cdot j^t]=D[i(t)]$ for some $d$-tuple $i(t)$. By equation \eqref{defa}, we must have
			$$
				-(i(t)-i(t)_t\cdot j^t)+j^t\cdot \|i(t)\|_1=-i+\Delta\cdot j^t,
			$$
			which, after simplification, yields
			$$
				i(t)_r=
				\begin{cases}
					i_r & \text{if }r \neq t\\
					\Delta-\|i\|_1 & \text{if }r=t.
				\end{cases}
			$$
			We conclude that $M_A'=D[i]+D[i(1)]+\ldots+D[i(d)]$, where $d$-tuple $i(t)$ is the same as $d$-tuple $i$, except that we replace entry $i_t$ by $\Delta-\|i\|_1$. Alternatively, $M_A'$ is the total edge-weight of $d$-cliques induced by sets of vertices $i,i(1),\ldots,i(d)$, which is the same as the total edge-weight of the $d+1$ clique induced by vertices $i_1,\ldots,i_d,\Delta-\|i\|_1$, multiplied by $d-1$. This yields the inequality. 
			
			\paragraph*{Case $(d-1)M_G\le M_A'$} Suppose that $M_G$ is achieved by $(d+1)$-clique induced by vertices $i_1,\ldots,i_d,i_{d+1}$. We set $i$ to be $d$-tuple $i=(i_1,\ldots,i_d)$ and we set integer $\Delta$ to be $\Delta=i_{d+1}+\|i\|_1$.
			Now we can check that $A[-i]+\sum_{t \in [d]}A[-i+\Delta\cdot j^t]$ is equal to the total edge-weight of $(d+1)$-tuple induced by vertices $i_1,\ldots,i_{d+1}$, multiplied by $d-1$. This statement follows from the definitions of arrays $A$ and $D$.
		\end{proof}

	\subsection{\CMaxSum $\Rightarrow$ \CMC}
		Let $A$ be the input $d$-dimensional array with side length $2n+1$ for the \CMaxSum problem. We produce $d$-dimensional array $A'$ of side length $2n+1$ from $A$ as follows. For all $d$-tuples $i \in \{-n,\ldots,n\}^d$, we set
		$$
			A'[i]=(-1)^{\|type(i)\|_1}A[i].
		$$
		$A'$ is input of \CMC problem.
		The correctness of this reduction follows from the definitions of both computational problems.
		
	\subsection{\CMC $\Rightarrow$ \MComb}
		Let $A$ be the input $d$-dimensional array with side length $2n+1$ for the \CMC problem. Let $M_C$ be the output of the \CMC problem on $A$. Let $c'$ be the largest absolute value among entries in $A$. We define $c:=100\cdot 2^d \cdot c'$. We define array $A'$ as follows.
		\begin{enumerate}
			\item Set $A'=A$.
			\item For every $i \in [n]^d$, set $A'[-i]=A[-i]+c$.
		\end{enumerate}
		Now we will show the equality
		\begin{equation} \label{cmcmc}
			M_C+c=
				\max_{i \in \{-n,\ldots,n\}^d,\ \Delta \in [2n+1]}\sum_{j \in B_d}(-1)^{\|j\|_1}\cdot A'[i+\Delta\cdot j].
		\end{equation}
		Notice that the r.h.s. of \eqref{cmcmc} is the \MComb problem on $d$-dimensional array with side length $2n+1$ after renumbering the entries. To show reduction, it therefore suffices to show equality \eqref{cmcmc}. Consider the $d$-tuple $i$ and the integer $\Delta$ that achieve the maximum in \eqref{cmcmc}. Suppose that for some $t \in [d]$, the $d$-tuple $i$ and integer $\Delta$ are such that $i_t+\Delta<0$. Then we have
		$$
			\sum_{j \in B_d}(-1)^{\|j\|_1}\cdot A'[i+\Delta\cdot j]\leq 2^d\cdot c'.
		$$
		because among the selected cells, all those with value $c$ cancel each other out.
		This cannot be an optimal solution, however, because we can achieve the value of at least $c-2^d\cdot c'>2^d\cdot c'$ by choosing $i=(-1,\ldots,-1)$ and $\Delta=1$. Therefore, an optimal choice of $d$-tuple $i$ and integer $\Delta$ will satisfy $i_t+\Delta\ge 0$ for all $t \in [d]$. If we add these constraints to the optimization problem on the r.h.s. of \eqref{cmcmc}, we get the \CMC problem with input array $A'$. The equality follows from the definition of array $A'$.
	
	\subsection{\MComb $\Rightarrow$ \MCube}
	%The reduction is analogous to the one in Section \ref{mrctomr}.
		Let $A$ be the input $d$-dimensional array with side-length $n$ to the \MComb problem. Given $A$, we produce $d$-dimensional array $A'$ of side-length $n-1$ such that the output of the \MCube problem on $A'$ is equal to the output of the \MComb problem on $A$.
		We construct $A'$ as follows. For every $d$-tuple $i \in [n-1]^d$, we set
		$$
			A'[i]=\sum_{j \in B_d}(-1)^{\|j\|_1}\cdot A[i+j].
		$$
		We can check equality
		\begin{equation} \label{mcmc}
			%\begin{aligned}
			 \sum_{i_1\le k_1\le i_1+\Delta}\ldots \sum_{i_d\le k_d\le i_d+\Delta}A'[k_1,\ldots,k_d] 
			=  \sum_{j \in B_d}(-1)^{\|j\|_1}\cdot A[i+(\Delta+1)\cdot j].
			%\end{aligned}
		\end{equation}
		
		In the \MCube problem, we maximize l.h.s. of \eqref{mcmc} over $d$-tuples $i$ and integers $\Delta=\{0,\ldots,n-2\}$. In the \MComb problem, we maximize r.h.s. of \eqref{mcmc} over $d$-tuples $i$ and integers $(\Delta+1) \in [n]$. The reduction follows from the definitions of the computational problems.

\section{Hardness for \WDepth problem}\label{sec_wdepth}
In this section, we prove a matching lower bound for the \WDepth problem. 
We need to show that a $O(N^{(d/2)-\eps})$ algorithm for the \WDepth problem implies a $O(n^{d-2\eps})$ time algorithm for finding maximum-weight $d$-clique in an edge-weighted graph with $n$ vertices.
	
	For this purpose, we adapt a reduction from \cite{chan2008slightly}, where a conditional lower bound is shown for \emph{combinatorial} algorithms for the closely related Klee's measure problem.

\begin{theorem}
For any constant $\eps>0$, an $O\left(n^{\lfloor d/2\rfloor-\eps}\right)$ time algorithm for the \WDepth problem in $d$ dimensional space implies an $O\left(n^{d-2\eps}\right)$ time algorithm for the \wClique{d} problem.
\end{theorem}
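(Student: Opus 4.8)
The plan is to adapt the reduction of Chan~\cite{chan2008slightly} for Klee's measure problem to the weighted setting, reducing \wClique{d} to \WDepth in $d$-dimensional space. Given an edge-weighted complete graph $G$ on $n$ vertices, I want to construct an instance of the \WDepth problem with $N = O(n^2)$ boxes so that the optimal weighted depth corresponds exactly (up to a fixed additive shift) to the weight of the maximum-weight $d$-clique. The guiding idea is that a $d$-clique is specified by choosing $d$ vertices $v_1,\ldots,v_d \in [n]$, and these choices should be encoded geometrically by a point $p = (v_1,\ldots,v_d) \in \RR^d$ whose coordinates name the chosen vertices. Each coordinate axis corresponds to one of the $d$ ``slots'' of the clique, and I will use axis-parallel boxes, each sensitive to only two coordinates at a time, to collect the weight of each of the $\binom{d}{2}$ edges of the clique.

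First I would encode each edge weight into boxes. For every pair of slots $a \neq b \in [d]$ and every pair of vertices $u,v \in [n]$, I place a box that is unbounded in all coordinates except the $a$-th and $b$-th, where it is positioned so that it contains precisely the points $p$ with $p_a = u$ and $p_b = v$ (after suitably discretizing vertices to integer coordinates/intervals). This box gets weight equal to $w(u,v)$. Summing over all pairs of slots, a point $p$ representing the clique $(v_1,\ldots,v_d)$ then accumulates exactly $\sum_{a<b} w(v_a,v_b)$, the clique weight, assuming each $p_i$ genuinely lands in a valid vertex slot. Second, I would add penalty boxes (with large negative weight $-M$, $M$ chosen larger than any achievable clique weight) to ensure that the optimal point $p$ has each coordinate inside a legal vertex region rather than at a degenerate location, so that the optimal depth is forced to correspond to an actual choice of $d$ vertices. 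This enforcement mirrors the large-constant ``forcing'' gadgets used throughout the earlier sections of the paper.

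The count is the crux of the runtime claim. The edge-weight gadgets produce $O(\binom{d}{2} n^2) = O(n^2)$ boxes since $d = O(1)$, and the forcing boxes add only $O(d n) = O(n)$ more, so $N = O(n^2)$. Running the hypothesized \WDepth algorithm on this instance costs $O(N^{\lfloor d/2\rfloor - \eps}) = O(n^{2(\lfloor d/2\rfloor - \eps)}) = O(n^{2\lfloor d/2\rfloor - 2\eps})$; for even $d$ this is $O(n^{d - 2\eps})$, and a mild adjustment of the encoding (splitting or padding to match the $\lceil d/2\rceil$ versus $\lfloor d/2\rfloor$ discrepancy, exactly as Chan's reduction handles odd dimensions) yields the stated bound. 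The point is that the quadratic blow-up in the number of boxes is unavoidable but exactly matched by the squaring of the target exponent.

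The main obstacle I anticipate is not the weight-accounting, which follows the additive-gadget template already developed in the paper, but rather making the geometry of Chan's Klee's-measure reduction line up correctly with the $\lfloor d/2\rfloor$ exponent while remaining faithful in the weighted regime. In particular, I would have to verify that Chan's construction, which pairs up coordinate slots to achieve the square-root savings for the \emph{unweighted/volume} problem, can carry real-valued weights additively without the overlaps of multiple boxes corrupting the intended sum at the optimal point. Ensuring that spurious overlaps either cancel or are dominated by the $-M$ penalties, and confirming that the optimal depth is achieved exactly at a clique-representing point, is where the careful case analysis will be required.
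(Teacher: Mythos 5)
Your core construction is exactly the paper's: for each pair of slots $a\neq b\in[d]$ and each pair of vertices $u,v\in[n]$, a box constrained to $[u,u+1)\times[v,v+1)$ in coordinates $a,b$ and unconstrained in the rest, with weight $w(u,v)$, so that a point $p\in[0,n)^d$ has depth equal (up to a factor $2$) to the weight of the clique on $\lfloor p_1\rfloor,\ldots,\lfloor p_d\rfloor$; then $N=O(n^2)$ and $N^{\lfloor d/2\rfloor-\eps}=n^{2\lfloor d/2\rfloor-2\eps}\le n^{d-2\eps}$. However, there is one genuine gap: you never rule out points whose coordinates select the \emph{same} vertex in two different slots. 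Your penalty boxes only force each coordinate into some legal interval $[u,u+1)$; they do not force the $d$ chosen intervals to be distinct. This matters because edge weights can be negative --- indeed, the paper's \kClique definition encodes missing edges as large negative weights, so negative weights are the essential case. A point with all $d$ coordinates in a single interval $[u,u+1)$ lies in no edge box and triggers none of your penalties, so it achieves depth $0$, which strictly beats every genuine clique point when, say, all edge weights equal $-1$; your reduction would then report a spurious answer. The paper closes this with the preliminary step you omitted: add a sufficiently large constant to every edge weight so that all weights become positive (this shifts every $d$-clique's weight by the same amount, hence preserves the maximizer), after which missing any slot pair strictly hurts and the optimal point automatically encodes $d$ distinct vertices. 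Alternatively, you could add $O(d^2n)$ extra penalty boxes forbidding $x_a,x_b\in[u,u+1)$ for $a\neq b$, but some such device must be stated explicitly.

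A secondary point: your worry about ``Chan's pairing of coordinate slots'' and about needing to split or pad for odd $d$ is misplaced. In this reduction there is no pairing and nothing to adjust; the $\lfloor d/2\rfloor$ in the hypothesis comes entirely from the runtime of the assumed \WDepth algorithm (matching Chan's $O(N^{d/2})$ upper bound), not from the geometry of the construction. For odd $d$ one simply has $2\lfloor d/2\rfloor=d-1\le d$, so the claimed $O(n^{d-2\eps})$ bound holds a fortiori. Likewise there is no overlap-cancellation issue to verify: weighted depth is by definition additive over the boxes containing a point, so overlapping boxes just sum, which is precisely what the weight accounting uses.
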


	\begin{proof}
		For each $u \neq v \in V=[n]$ and $i \neq j \in [d]$, we create a rectangle 
		$$
			\left\{ (x_1,\ldots,x_d) \in [0,n)^d \ : \ x_i \in [u,u+1), x_j \in [v,v+1) \right\}
		$$
		and we set the weight of this rectangle to be equal to the weight $w(u,v)$ of the edge $(u,v)$.
		The total number of rectangles is $N=O(d^2n^2) = O(n^2)$.
		
		W.l.o.g., for all $u \neq v \in V$, $w(u,v)>0$ (if this is not so, we add a sufficiently large enough fixed quantity to the weight of every edge). The heaviest point $p$ therefore lives in $[0,n)^d$.
		We claim that the weight of the heaviest point in $[0,n)^d$ is twice the weight of the heaviest $d$-clique in the graph.
		This is so, since the weight of a point $p \in [0,n)^d$ is equal to 
		$$
			\sum_{i \neq j \in [d]}w(\lfloor p_i \rfloor, \lfloor p_j \rfloor),
		$$
		which is twice the weight of $d$-clique supported on the vertices $\lfloor p_1 \rfloor,\ldots,\lfloor p_d \rfloor$. Conversely, the weight of $d$-clique supported on the vertices $v_1,\ldots,v_d \in [n]$, is equal to half of the weight of point $(v_1,\ldots,v_d) \in [0,n)^d$.
	\end{proof}

\section{Acknowledgments}
We thank Piotr Indyk for providing helpful discussions and comments on an earlier version of the paper.
We thank Linda Lynch for copy-editing parts of the paper. One of the authors, Nishanth Dikkala, would like to thank Namratha Dikkala for a discussion about efficient algorithms for geometric problems which led to some of the questions pursued in this paper.

\bibliographystyle{alpha}
\bibliography{ref}

\newcommand{\etalchar}[1]{$^{#1}$}
\begin{thebibliography}{CDBPL{\etalchar{+}}09}

\bibitem[ABW15]{abboud2015if}
Amir Abboud, Arturs Backurs, and Virginia~Vassilevska Williams.
\newblock If the current clique algorithms are optimal, so is valiant's parser.
\newblock In {\em Foundations of Computer Science (FOCS), 2015 IEEE 56th Annual
  Symposium on}, pages 98--117. IEEE, 2015.

\bibitem[AGW15]{abboud2015subcubic}
Amir Abboud, Fabrizio Grandoni, and Virginia~Vassilevska Williams.
\newblock Subcubic equivalences between graph centrality problems, apsp and
  diameter.
\newblock In {\em Proceedings of the Twenty-Sixth Annual ACM-SIAM Symposium on
  Discrete Algorithms}, pages 1681--1697. SIAM, 2015.

\bibitem[APV06]{agarwal2006hunting}
Deepak Agarwal, Jeff~M Phillips, and Suresh Venkatasubramanian.
\newblock The hunting of the bump: on maximizing statistical discrepancy.
\newblock In {\em Proceedings of the seventeenth annual ACM-SIAM symposium on
  Discrete algorithm}, pages 1137--1146. Society for Industrial and Applied
  Mathematics, 2006.

\bibitem[AVWY15]{abboud2015matching}
Amir Abboud, Virginia Vassilevska~Williams, and Huacheng Yu.
\newblock Matching triangles and basing hardness on an extremely popular
  conjecture.
\newblock In {\em Proceedings of the Forty-Seventh Annual ACM on Symposium on
  Theory of Computing}, pages 41--50. ACM, 2015.

\bibitem[AW14]{abboud2014popular}
Amir Abboud and Virginia~Vassilevska Williams.
\newblock Popular conjectures imply strong lower bounds for dynamic problems.
\newblock In {\em Foundations of Computer Science (FOCS), 2014 IEEE 55th Annual
  Symposium on}, pages 434--443. IEEE, 2014.

\bibitem[AWW14]{abboud2014consequences}
Amir Abboud, Virginia~Vassilevska Williams, and Oren Weimann.
\newblock Consequences of faster alignment of sequences.
\newblock In {\em Automata, Languages, and Programming}, pages 39--51.
  Springer, 2014.

\bibitem[BCNPL14]{barbay2014maximum}
J{\'e}r{\'e}my Barbay, Timothy~M Chan, Gonzalo Navarro, and Pablo
  P{\'e}rez-Lantero.
\newblock Maximum-weight planar boxes in o ($n^2$) time (and better).
\newblock {\em Information Processing Letters}, 114(8):437--445, 2014.

\bibitem[Ben84]{bentley1984programming}
Jon Bentley.
\newblock Programming pearls: algorithm design techniques.
\newblock {\em Communications of the ACM}, 27(9):865--873, 1984.

\bibitem[BK10]{backer2010mono}
Jonathan Backer and J~Mark Keil.
\newblock The mono-and bichromatic empty rectangle and square problems in all
  dimensions.
\newblock In {\em LATIN 2010: Theoretical Informatics}, pages 14--25. Springer,
  2010.

\bibitem[CCTC05]{cheng2005improved}
Chih-Huai Cheng, Kuan-Yu Chen, Wen-Chin Tien, and Kun-Mao Chao.
\newblock Improved algorithms for the k maximum-sums problems.
\newblock In {\em Algorithms and Computation}, pages 799--808. Springer, 2005.

\bibitem[CDBPL{\etalchar{+}}09]{cortes2009bichromatic}
C~Cort{\'e}s, Jos{\'e}~Miguel D{\'\i}az-B{\'a}{\~n}ez, Pablo P{\'e}rez-Lantero,
  Carlos Seara, Jorge Urrutia, and Inmaculada Ventura.
\newblock Bichromatic separability with two boxes: a general approach.
\newblock {\em Journal of Algorithms}, 64(2):79--88, 2009.

\bibitem[Cha08]{chan2008slightly}
Timothy~M Chan.
\newblock A (slightly) faster algorithm for klee's measure problem.
\newblock In {\em Proceedings of the twenty-fourth annual symposium on
  Computational geometry}, pages 94--100. ACM, 2008.

\bibitem[Cha13]{chan2013klee}
Timothy~M Chan.
\newblock Klee's measure problem made easy.
\newblock In {\em Foundations of Computer Science (FOCS), 2013 IEEE 54th Annual
  Symposium on}, pages 410--419. IEEE, 2013.

\bibitem[DG94]{dobkin1994computing}
David~P Dobkin and Dimitrios Gunopulos.
\newblock Computing the rectangle discrepancy.
\newblock In {\em Proceedings of the tenth annual symposium on Computational
  geometry}, pages 385--386. ACM, 1994.

\bibitem[DGM96]{dobkin1996computing}
David~P Dobkin, Dimitrios Gunopulos, and Wolfgang Maass.
\newblock Computing the maximum bichromatic discrepancy, with applications to
  computer graphics and machine learning.
\newblock {\em journal of computer and system sciences}, 52(3):453--470, 1996.

\bibitem[EHL{\etalchar{+}}02]{eckstein2002maximum}
Jonathan Eckstein, Peter~L Hammer, Ying Liu, Mikhail Nediak, and Bruno Simeone.
\newblock The maximum box problem and its application to data analysis.
\newblock {\em Computational Optimization and Applications}, 23(3):285--298,
  2002.

\bibitem[FHLL93]{fischer1993approximations}
Paul Fischer, Klaus-U H{\"o}ffgen, Hanno Lefmann, and Tomasz Luczak.
\newblock Approximations with axis-aligned rectangles.
\newblock In {\em Fundamentals of Computation Theory}, pages 244--255.
  Springer, 1993.

\bibitem[FMMT96]{fukuda1996data}
Takeshi Fukuda, Yasukiko Morimoto, Shinichi Morishita, and Takeshi Tokuyama.
\newblock Data mining using two-dimensional optimized association rules:
  Scheme, algorithms, and visualization.
\newblock {\em ACM SIGMOD Record}, 25(2):13--23, 1996.

\bibitem[GO95]{gajentaan1995class}
Anka Gajentaan and Mark~H Overmars.
\newblock On a class of o (n 2) problems in computational geometry.
\newblock {\em Computational geometry}, 5(3):165--185, 1995.

\bibitem[LG14]{le2014powers}
Fran{\c{c}}ois Le~Gall.
\newblock Powers of tensors and fast matrix multiplication.
\newblock In {\em Proceedings of the 39th international symposium on symbolic
  and algebraic computation}, pages 296--303. ACM, 2014.

\bibitem[LN03]{liu2003planar}
Ying Liu and Mikhail Nediak.
\newblock Planar case of the maximum box and related problems.
\newblock In {\em CCCG}, pages 14--18, 2003.

\bibitem[Maa94]{maass1994efficient}
Wolfgang Maass.
\newblock Efficient agnostic pac-learning with simple hypothesis.
\newblock In {\em Proceedings of the seventh annual conference on Computational
  learning theory}, pages 67--75. ACM, 1994.

\bibitem[NP85]{nevsetvril1985complexity}
Jaroslav Ne{\v{s}}et{\v{r}}il and Svatopluk Poljak.
\newblock On the complexity of the subgraph problem.
\newblock {\em Commentationes Mathematicae Universitatis Carolinae},
  26(2):415--419, 1985.

\bibitem[OY91]{overmars1991new}
Mark~H Overmars and Chee-Keng Yap.
\newblock New upper bounds in klee's measure problem.
\newblock {\em SIAM Journal on Computing}, 20(6):1034--1045, 1991.

\bibitem[PD95]{perumalla1995parallel}
Kalyan Perumalla and Narsingh Deo.
\newblock Parallel algorithms for maximum subsequence and maximum subarray.
\newblock {\em Parallel Processing Letters}, 5(03):367--373, 1995.

\bibitem[QA99]{qiu1999parallel}
Ke~Qiu and Selim~G Akl.
\newblock Parallel maximum sum algorithms on interconnection networks.
\newblock In {\em Proceedings of the Eleventh IAESTED Conference on Parallel
  and Distributed Computing and Systems}, pages 31--38. Citeseer, 1999.

\bibitem[RZ04]{roditty2004dynamic}
Liam Roditty and Uri Zwick.
\newblock On dynamic shortest paths problems.
\newblock In {\em Algorithms--ESA 2004}, pages 580--591. Springer, 2004.

\bibitem[Smi87]{smith1987applications}
Douglas~R Smith.
\newblock Applications of a strategy for designing divide-and-conquer
  algorithms.
\newblock {\em Science of Computer Programming}, 8(3):213--229, 1987.

\bibitem[Tak02]{takaokaguy}
Tadao Takaoka.
\newblock Efficient algorithms for the maximum subarray problem by distance
  matrix multiplication.
\newblock {\em Electronic Notes in Theoretical Computer Science}, 61:191--200,
  2002.

\bibitem[TT98]{tamaki1998algorithms}
Hisao Tamaki and Takeshi Tokuyama.
\newblock Algorithms for the maxium subarray problem based on matrix
  multiplication.
\newblock In {\em SODA}, volume 1998, pages 446--452, 1998.

\bibitem[VW]{max_sum_submatrix}
Virginia Vassilevska~Williams.
\newblock personal communication.

\bibitem[VW15]{vassilevska2015hardness}
Virginia Vassilevska~Williams.
\newblock Hardness of easy problems: Basing hardness on popular conjectures
  such as the strong exponential time hypothesis (invited talk).
\newblock In {\em LIPIcs-Leibniz International Proceedings in Informatics},
  volume~43. Schloss Dagstuhl-Leibniz-Zentrum fuer Informatik, 2015.

\bibitem[Wil12]{williams2012multiplying}
Virginia~Vassilevska Williams.
\newblock Multiplying matrices faster than coppersmith-winograd.
\newblock In {\em Proceedings of the forty-fourth annual ACM symposium on
  Theory of computing}, pages 887--898. ACM, 2012.

\bibitem[Wil14]{williams2014faster}
Ryan Williams.
\newblock Faster all-pairs shortest paths via circuit complexity.
\newblock In {\em Proceedings of the 46th Annual ACM Symposium on Theory of
  Computing}, pages 664--673. ACM, 2014.

\bibitem[WW10]{williams2010subcubic}
Virginia~Vassilevska Williams and Ryan Williams.
\newblock Subcubic equivalences between path, matrix and triangle problems.
\newblock In {\em Foundations of Computer Science (FOCS), 2010 51st Annual IEEE
  Symposium on}, pages 645--654. IEEE, 2010.

\end{thebibliography}

\end{document}